\newtheorem{proposition}{Proposition}
\newcommand{\R}{\mathbb R} 
\newcommand{\C}{\mathbb C} 
\newcommand{\mo}[1]{\left| #1 \right|} 
\newcommand{\abs}{\mo} 
\newcommand{\hi}{\mathcal{H}} 
\newcommand{\trh}{\mathcal{T}(\mathcal{H}_S)} 
\newcommand{\sh}{\mathcal{S}(\mathcal{H}_S)} 
\newcommand{\se}{\mathcal{S}(\mathcal{H}_E)} 
\newcommand{\ket}[1]{|#1\rangle} 
\newcommand{\bra}[1]{\langle#1|} 
\newcommand{\tr}[1]{\textrm{tr}\left[#1\right]} 
\newcommand{\id}{\mathbbm{1}} 
\newcommand{\supp}{\textrm{supp}} 
\newcommand{\mc}[1]{\mathcal{#1}} 
\newcommand{\expo}[1]{\textrm{exp}[#1]} 
\newcommand{\Co}{\mathsf{C}}
\newcommand{\Ho}{\mathsf{H}}
\begin{document}

\title[]{Fidelity of dynamical maps}






\author{Mikko Tukiainen}
\affiliation{Turku Centre for Quantum Physics, Department of Physics and Astronomy, University of Turku, FI-20014 Turun yliopisto, Finland}
\author{Henri Lyyra}
\affiliation{Turku Centre for Quantum Physics, Department of Physics and Astronomy, University of Turku, FI-20014 Turun yliopisto, Finland}
\author{Gniewomir Sarbicki}
\affiliation{Institute of Physics, Nicolaus Copernicus University, Grudzi\c{a}dzka 5/7, 87--100 Toru\'n, Poland}
\author{Sabrina Maniscalco}
\affiliation{Turku Centre for Quantum Physics, Department of Physics and Astronomy, University of Turku, FI-20014 Turun yliopisto, Finland}
\affiliation{Centre for Quantum Engineering, Department of Applied Physics,  School of Science, Aalto University, P.O. Box 11000, FIN-00076 Aalto, Finland}

\begin{abstract}
We introduce the concept of fidelity for dynamical maps in an open quantum system scenario. We derive an inequality linking this quantity to the distinguishability of the inducing environmental states. Our inequality imposes constraints on the allowed set of dynamical maps arising from the microscopic description of system plus environment. Remarkably, the inequality involves only the states of the environment and the dynamical map of the open system and, therefore, does not rely on the knowledge of either the microscopic interaction Hamiltonian or the environmental Hamiltonian characteristic parameters. 
We demonstrate the power of our result by applying it to two different scenarios: quantum programming and quantum probing. In the first case we use it to derive bounds on the dimension of the processor for approximate programming of unitaries. In the second case we present an intriguing proof-of-principle demonstration of the ability to extract information on the environment via a quantum probe without any {\it a priori} assumption on the form of the system-environment coupling Hamiltonian.

\end{abstract}

\maketitle
\section{Introduction}\label{sec:intro}
Quantum systems are extremely sensitive to noise arising from the interaction with their surroundings. This feature is at the very heart of the transition between the quantum and the classical description of the world and, at the same time, imposes limitations on the efficiency of quantum devices for quantum technologies. For this reason, a number of theoretical and experimental approaches focus on the modelization, characterization and reduction of noise induced by the environment, both by increasing the isolation of the system and by suitable engineering of the environment. 

A key quantity in the dynamical description of open quantum systems is the dynamical map, a time-parametrized family of quantum channels \cite{Breuer}. Formally, the exact description of the dynamical map can be obtained starting from a microscopic Hamiltonian model of the system, the environment and their interaction. In many practical circumstances, however, one does not have accurate knowledge of the microscopic details of either the interaction or the environmental Hamiltonian. In addition, it is well known that different environments and couplings can lead to the same dynamical map. It is therefore clear that, both from an experimental and from a theoretical perspective, it is crucial to develop approaches able to i) discriminate between different dynamical maps; ii) identify the restrictions imposed on the form of the dynamical map from the specific choices of the environmental state, Hamiltonian and coupling, and iii) develop techniques to probe the environment without ideally any {\it a priori} assumption on its microscopic properties and on the way it is coupled to the system.

In this paper we address these issues by introducing a new family of channel fidelities which generalize the familiar (Uhlmann) fidelity of quantum states. Endowed with these quantifiers, we prove a powerful inequality between the fidelities of the two environmental states inducing the dynamics and the corresponding dynamical maps. More precisely, the introduced fidelity quantifies the difference between two dynamical maps by measuring how distinguishable they render a pair of initially identical (or more generally non-orthogonal) states, which are initially uncorrelated from their dynamics-inducing environments. This addresses point i). The derived inequality, on the other hand, allows us to answer the points ii) and iii) since, contrarily to the probing approaches existing in the literature \cite{recati2005, bruderer2006, johnson2011, dorner2013, mcendoo2013, haikka2013, haikka2014, Anil151, Anil152}, it does not rely on the knowledge of either the system-environment interaction Hamiltonian or the characteristic parameters of the environmental Hamiltonian. We prove the usefulness of our inequality by applying it to two different scenarios both highly relevant for the development of quantum technologies, namely quantum programming and quantum probing.

A programmable quantum processor, or gate array, is a device that implements different quantum channels on one system (the data register) depending on the state of another quantum system (the program register). The third element of this device is a fixed array of gates acting on both the data and the program register. 
The virtue of this arrangement is its versatility: one can realize various maps simply by controlling the state of program register. The alteration of the programming state to perform the desired operation is called quantum programming.
Programmable quantum processors were first considered by Nielsen and Chuang in the late 90's  \cite{NielsenChuang}.  Remarkably, these authors showed that  a deterministic universal programmable processor cannot be realised. The reason for this impossibility is that any two states of the program register implementing a pair of inequivalent unitary transformations need to be orthogonal \cite{NielsenChuang,Hillery02, Hillery06,HeiTuk2015}. Therefore, the programming resources, bounded by the dimension of the program register, are insufficient to implement all unitaries, comprising an uncountable set. 

The no-programming theorem above holds only if the programming is deterministic. Indeed, it was shown already in Ref.\,\cite{NielsenChuang} that it is possible to successfully implement any one-qubit unitary operation with $1/4$ probability, thus proving the existence of probabilistic universal quantum processors. This result was generalized in Refs.\,\cite{VidalarX, Vidal02}, where it was proven that the probability of success is $ p=1-\varepsilon$ and the error $\varepsilon$ can be made arbitrarily small; the results were later extend for qudits in Refs.\,\cite{Hillery01, Hillery04}. Probabilistic programming, however, tells little about programming states of general quantum transformations, described by completely positive and trace preserving (CPTP) maps. This is, however, important as all realistic implementations of quantum devices are subjected to environmental noise. In this more realistic scenario, known as approximate quantum programming, maps are near to, but not exactly, unitary. 

Approximate quantum programming has been studied in Refs.\,\cite{VidalarX, Hillery02, Hillery06}. In particular, in Refs.\,\cite{Hillery02, Hillery06} relations between the overlap of pure programming states and the programmed channels are presented. 
Our inequality, on the other hand, sets a novel upper bound for the programming states in terms of the programmed channels and is independent of the choice of the processor. This independence is important since, as we argued above, the microscopic description of the environmental noise affecting the processor may not be known. In addition, unlike in the previously proposed relations, our inequality holds for general (mixed) programming states. In other words, our bound is universally valid and enables us to tackle efficiently problems such as finding size bounds for quantum circuit designs able to implement the desired transformations in full generality. 

The second application is in the framework of quantum probing. One particularly renowned quantum feature is summarized by the statement {\it no information without disturbance}. Namely, any measurement producing meaningful information generally transforms the state of the measured system. An observer is thus faced with a dilemma: how to ascertain and assign values to properties of the measured system since they may be, and typically are, altered due to the measurement? One way of obtaining some information on the system of interest without predisposing it to a direct measurement is to let it temporarily interact with a smaller ancillary quantum system and then perform a measurement on it: this technique is known as quantum probing. The system of interest in this framework acts as the environment of the quantum probe, and the induced probe dynamics carries information on the properties of the environment. If the coupling between the probe and the system of interest is sufficiently weak, measurements on the probe only lead to small perturbations on the system. In addition, measuring the full probe dynamics ideally allows one to extract even complete information on the system while leading to only minimal disturbance \cite{recati2005, bruderer2006, johnson2011, dorner2013, mcendoo2013, haikka2013, haikka2014}. 

Typical quantum probing strategies assume the knowledge of the microscopic system-probe interaction as well as the system Hamiltonian. The novelty of our method is to use the quantum programming perspective in order to go beyond these existing probing approaches. Namely, by comparing the dynamics induced by an unknown environmental state with the one of a calibration state, some properties on the measured system can be extracted with none or minimal {\it a priori} assumptions.Indeed, we provide examples in which such a probing protocol can be used for parameter estimation without knowing anything about the system-probe coupling or the probe Hamiltonian.

The paper is structured as follows. In Sect.\,\ref{sec:prel} we introduce the mathematical methods and the notation used throughout this work and provide the main result of this paper: the inequality between the channel-fidelity of two state transformations and the fidelity of the corresponding two inducing states. In  Sect.\,\ref{sec:applic} we present four applications that capture the power of  this result from different perspectives. Finally, in Sect.\,\ref{sec:conc} we summarize our results and present conclusions.

\section{Inequality between inducing states and open system dynamics}\label{sec:prel}
We denote a complex separable Hilbert space by $\hi$ 
 and the trace class operators on $\hi$ by $\mc{T(H)}$. 
A \emph{quantum state} is represented as a positive operator $\varrho \in \mc{T(H)}$ with $\tr{\varrho}=1$ and the set of quantum states of $\hi$ is denoted by $\mc{S(H)}$. We indicate with $\supp(\varrho)$ the support of the state $\varrho$ and say that $\varrho_1$ and $\varrho_2$ are orthogonal, i.e. $\varrho_1 \perp \varrho_2$, when the sets $\supp(\varrho_1)$ and $\supp(\varrho_2)$ are orthogonal. The extremal elements of $\mc{S(H)}$ are called \emph{pure} and such element can be written as $\varrho = |\varphi\rangle\langle\varphi|$ for some unit vector $| \varphi \rangle \in \hi$. The transformations of quantum states 
 are represented by linear mappings $\mc E: \mc{T(H)} \rightarrow \mc{T(H')}$ that are completely positive and trace-preserving: such transformations are called {\it quantum channels}. In particular, a channel $\mc U$ is called a {\it unitary channel} whenever it is of the form $\mc U(\varrho) = U \varrho U^{\dagger}$ for some unitary operator $U$ on $\hi$ and for all $\varrho \in \mc{S(H)}$. 

The dynamics of a closed quantum system with initial state $\varrho$ is described by a $t$-parametrized group of unitary operators $U^{(t)}$ as $\mc U^{(t)}(\varrho)= U^{(t)}\varrho U^{(-t)}$, $t\geq0$. In reality every physically realizable quantum system $S$, with Hilbert space $\hi_S$, interacts with some environment, $E$, with Hilbert space $\hi_E$, such that the pair $S + E$ can be considered a closed system, represented by $\hi_S\otimes\hi_E$. One commonly assumes that the system and the environment are initially uncorrelated, that is $\varrho_{S + E} = \varrho\otimes\xi$. The reduced dynamics of the system state can be calculated from the total system dynamics $\mc U^{(t)}(\varrho\otimes\xi)$ as a partial trace over $\hi_E$:
\begin{eqnarray}\label{eq:redchannel}
\mc E^{(t)} (\varrho) = \text{tr}_E[\mc U^{(t)}(\varrho\otimes\xi)]\,.
\end{eqnarray}
In particular, we say that the dynamics $\mc E^{(t)}(\varrho) $ is {\it induced} by the environmental state $\xi$. 
%

The amount of information ascribed in the state $\varrho$ is measured in terms of entropy. In our investigation, we are focusing on a class of $\alpha$-R\'{e}nyi entropies $S_\alpha(\varrho) \doteq \frac{1}{1-\alpha} \ln \tr{\varrho^\alpha}$, $\alpha\in(0,1)\cup(1,\infty)$. The $\alpha$-R\'{e}nyi entropies generate a one-parameter family of divergences, that from a certain axiomatic point of view for classical random variables $X_1$ and $X_2$ takes the unique form
\begin{eqnarray}\label{eq:defrenyicl}
D_\alpha(X_1 || X_2) = \frac{1}{\alpha-1} \sum_{i=1}^n \ln\big[ p_1(x_i) ^\alpha p_2(x_i)^{1-\alpha} \big] \, ,
\end{eqnarray}
where the two probability measures $p_1$ and $p_2$ measure the probabilities for outcome $x_i$, $i=1,\ldots, n$, occurring in $X_1$ and $X_2$, respectively \cite{Renyi}. 
In a quantum scenario, however, the uniqueness of the formulation of the divergences arising from the same set axioms is not guaranteed, and two divergent extensions of Eq.\,\eqref{eq:defrenyicl} to the quantum setting have been proposed in the literature 
\begin{eqnarray}\label{eq:defrenyi0}
\widetilde{S}_\alpha(\varrho_1 || \varrho_2)\doteq \left\{ \begin{array}{ll}
\frac{1}{\alpha -1} \ln\left[ \tr{ \varrho_1^\alpha \varrho_2^{1-\alpha}} \right],& \varrho_1 \not\perp \varrho_2 \\
\infty, & \text{otherwise},
\end{array} \right. 
\qquad
\end{eqnarray}
and
\begin{eqnarray}\label{eq:defrenyi}
&&S_\alpha \left( \varrho_1 || \varrho_2 \right) \doteq \nonumber \\
& & \left\{ \begin{array}{ll}
\frac{1}{\alpha-1} \ln \left[ \tr{\left(\varrho_2^{\frac{1-\alpha}{2\alpha}} \varrho_1 \, \varrho_2^{\frac{1-\alpha}{2\alpha}} \right)^\alpha} \right], & \text{when} \ \varrho_1\not\perp \varrho_2  \\
 \infty, & \text{otherwise,}
\end{array} \right. 
\end{eqnarray}
defined for $\alpha \in (0,1)$ and a pair of quantum states $\varrho_1, \varrho_2 \in \sh$ \cite{Muller2013, Wilde2014}. In this study, we will be focusing for the latter definition $S_\alpha$ and call it the {\it quantum $\alpha$-R\'{e}nyi divergence}. It is, however, worth mentioning that all the results we will present in this Article could be also formulated in the context of $\widetilde{S}_\alpha$ and furthermore, since the $S_{\alpha}$ and $\widetilde{S}_\alpha$ coincide for commuting states $\big[\varrho_1, \varrho_2 \big] =0$ \cite{Muller2013}, most of these results are in fact equivalent for the two definitions.

Importantly, many of the most commonly used quantum (relative) entropies, such as the common (relative) entropy, the min- and the max-(relative)-entropies, can be derived from the quantum $\alpha$-R\'{e}nyi divergence as special cases; for the properties of $\alpha$-R\'{e}nyi divergence we refer the reader to \cite{Beigi2013, Datta2014, Datta2016, Carlen2016} and references therein. For our purposes, we recall here the following features
\begin{itemize}
\item[(S1)] $S_\alpha \big(\varrho_1||\varrho_2\big)\geq 0$, where the equality holds if and only if $ \varrho_1 = \varrho_2$,
\item[(S2)] $S_\alpha\big(\varrho_1 \otimes \xi_1 || \varrho_2 \otimes \xi_2\big) = S_\alpha\big(\varrho_1 || \varrho_2\big) + S_\alpha\big(\xi_1 || \xi_2\big)$,
\item[(S3)] $S_\alpha\big(\mc U(\varrho_1) || \mc U(\varrho_2)\big) = S_\alpha\big(\varrho_1 || \varrho_2\big)$
\end{itemize}
for all  $\alpha\in(0,1)$ and $\varrho_1,\varrho_2 \in \sh$,  $\xi_1,\xi_2 \in \se$ and for all unitary channels $\mc U$. Notice, that the order of inputs is important, since in general $S_{\alpha} \big(\varrho_1, \varrho_2\big) \neq S_{\alpha} \big(\varrho_2,\varrho_1\big)$. Additionally, for all $\alpha\in[1/2,1)$, the $\alpha$-R\'{e}nyi divergence satisfies the data processing inequality
\begin{itemize}
\item[(S4)] $S_\alpha\big(\mc E(\varrho_1) || \mc E(\varrho_2)\big) \leq S_\alpha\big(\varrho_1 || \varrho_2\big)$, 
\end{itemize}
for all  $\varrho_1, \varrho_2 \in \sh$ and for arbitrary quantum channels $\mc E:\trh \rightarrow \mc T(\hi_S')$.

From Eq.\,\eqref{eq:redchannel} it is clear that altering the inducing state $\xi\in\se$, while keeping the coupling interaction fixed, may lead to different channels. A question then arises: How do the two different inducing states and the corresponding induced channels relate to each other? The following proposition provides insight into this question.
\begin{proposition}
Suppose that two (different) channels $\mc E_1$ and $\mc E_2:\trh \rightarrow \trh$ are induced by states $\xi_1$ and $\xi_2\in \se$, respectively, from some fixed (unitary) coupling $\mc U:\mc T(\hi_S\otimes\hi_E)\rightarrow\mc T(\hi_S\otimes\hi_E)$; see Eq.\,\eqref{eq:redchannel}. Then for all $\alpha\in[1/2,1)$ and $\varrho_1,\varrho_2 \in \sh$
\begin{eqnarray}\label{eq:mainineq1}
S_\alpha \big(\mc E_1(\varrho_1)||\mc E_2(\varrho_2)\big) - S_\alpha\big(\varrho_1||\varrho_2\big) \leq  S_\alpha\big(\xi_1||\xi_2\big).
\end{eqnarray} 
\end{proposition}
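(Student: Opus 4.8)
The plan is to recognise the left-hand side as the result of applying one fixed quantum channel to the two uncorrelated initial states $\varrho_1\otimes\xi_1$ and $\varrho_2\otimes\xi_2$, and then to strip off the three contributions one at a time using properties (S4), (S3) and (S2) in that order. First I would rewrite the induced states, using Eq.\,\eqref{eq:redchannel} with the fixed coupling, as $\mc E_1(\varrho_1) = \tre{\mc U(\varrho_1\otimes\xi_1)}$ and $\mc E_2(\varrho_2) = \tre{\mc U(\varrho_2\otimes\xi_2)}$. The relevant object is therefore the composite map $\tre{\mc U(\cdot)}$ from $\mc T(\hi_S\otimes\hi_E)$ to $\trh$; since $\mc U$ is a unitary channel and the partial trace $\textrm{tr}_E$ is completely positive and trace-preserving, this composition is itself a legitimate quantum channel, which is exactly what is needed to invoke (S4).

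The core step is then the data processing inequality (S4), available precisely because $\alpha\in[1/2,1)$: applying the channel $\tre{\mc U(\cdot)}$ to both arguments cannot increase the divergence, so that
\begin{eqnarray}
S_\alpha\big(\mc E_1(\varrho_1)||\mc E_2(\varrho_2)\big) \leq S_\alpha\big(\mc U(\varrho_1\otimes\xi_1)||\mc U(\varrho_2\otimes\xi_2)\big). \nonumber
\end{eqnarray}
Next I would apply the unitary invariance (S3) to remove $\mc U$ from the right-hand side, reducing it to $S_\alpha\big(\varrho_1\otimes\xi_1||\varrho_2\otimes\xi_2\big)$, and finally the tensor additivity (S2) to split this into $S_\alpha\big(\varrho_1||\varrho_2\big) + S_\alpha\big(\xi_1||\xi_2\big)$. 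Rearranging the resulting chain of (in)equalities produces Eq.\,\eqref{eq:mainineq1} immediately.

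I do not anticipate a genuine obstacle: the argument is a clean three-line chaining of the listed properties. The only points requiring care are bookkeeping ones. One must keep the (asymmetric) order of the arguments fixed throughout, consistently pairing $\varrho_1,\xi_1$ as the first entry and $\varrho_2,\xi_2$ as the second, since $S_\alpha$ is not symmetric. One must also invoke the hypothesis $\alpha\in[1/2,1)$ exactly at the step where (S4) is used, as the data processing inequality can fail outside this range, and confirm that $\textrm{tr}_E\circ\mc U$ indeed qualifies as an admissible channel in the sense demanded by (S4)---which it does, because both unitary conjugation and the partial trace are completely positive and trace-preserving.
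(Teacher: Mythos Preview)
Your proposal is correct and follows essentially the same route as the paper: both arguments chain (S2), (S3) and (S4) on the product states $\varrho_i\otimes\xi_i$, with the paper running the chain in the reverse order (first combining via (S2) and (S3), then applying (S4) to the partial trace). One cosmetic remark: you describe applying (S4) to the composite $\textrm{tr}_E\circ\mc U$, but the displayed inequality actually applies (S4) only to $\textrm{tr}_E$ and then removes $\mc U$ via (S3)---either reading is valid, and the latter matches the paper exactly.
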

\begin{proof}
The claim follows from the properties (S1)-(S4), namely
  \begin{eqnarray}
&&   S_\alpha \big( \varrho_1 || \varrho_2 \big) + S_\alpha \big( \xi_1 || \xi_2 \big) \nonumber \\
&=&  S_\alpha \big( \mc U(\varrho_1\otimes\xi_1) || \mc U(\varrho_2\otimes\xi_2) \big) \nonumber \\
   &\geq &    S_\alpha \big( \text{tr}_{E} \left[ \mc U(\varrho_1\otimes\xi_1) \right] || \text{tr}_{E} \left[ \mc U(\varrho_2\otimes\xi_2)\right] \big) \nonumber \\
   &=&   S_\alpha \big(\mc E_1(\varrho_1)||\mc E_2(\varrho_2)\big) \, .
  \end{eqnarray}
\end{proof}
We will use a short hand notation 
\begin{eqnarray}
&&F_{\alpha} \big(\varrho_1, \varrho_2\big) \doteq \tr{\left(\varrho_2^{\frac{1-\alpha}{2\alpha}} \varrho_1 \, \varrho_2^{\frac{1-\alpha}{2\alpha}} \right)^\alpha},
\end{eqnarray} since it will often appear in the following, and call the quantity $F_{\alpha}$ the {\it $\alpha$-fidelity of states} \footnote{In one of the founding papers  \cite{Wilde2014}, this quantity goes by the name ``sandwiched quasi-relative entropy''.}. In the particular case $\alpha=1/2$, $F_{1/2}$ corresponds to the usual (Uhlmann) fidelity of states. Using this notation and the monotonicity of the logarithm function the Ineq.\,(\ref{eq:mainineq1}) implies
\begin{eqnarray}\label{eq:mainineq2}
F_{\alpha} \big(\varrho_1, \varrho_2\big) F_{\alpha} \big(\xi_1, \xi_2\big) \leq F_{\alpha} \big(\mc E_1(\varrho_1), \mc E_2(\varrho_2)\big),
\end{eqnarray} 
for $\alpha\in[1/2,1)$, which can be seen as a generalization of the data processing inequality $F_{\alpha} \big(\varrho_1, \varrho_2\big) \leq F_{\alpha} \big(\mc E(\varrho_1), \mc E(\varrho_2)\big)$, for $\alpha\in[1/2,1)$.  Motivated by this inequality, for $\alpha\in(0,1)$, we define
\begin{eqnarray}\label{eq:chanafid} \mc F_{\alpha}(\mc E_1, \mc E_2) \doteq \inf_{\varrho_1,\varrho_2\in\sh} \frac{ F_{\alpha} \big(\mc E_1(\varrho_1), \mc E_2(\varrho_2)\big) }{ F_{\alpha} \big(\varrho_1, \varrho_2 \big) },
\end{eqnarray} and call it the {\it $\alpha$-fidelity of channels} $\mc E_1$ and $\mc E_2$. We justify the terminology, since $\mc F_{\alpha}$ shares many of the basic properties of fidelity-measures as we shall verify next.
\begin{proposition}
For all $\alpha\in(0,1)$ and channels $\mc E_i:\mc T(\hi_S)\to\mc T(\hi_S)$, $i=1,2$, $\alpha$-fidelity of channels $\mc F_{\alpha}(\mc E_1, \mc E_2)$ has the following properties:
\begin{itemize}
\item[(F1)] $\mc F_{\alpha}(\mc E_1, \mc E_2)\in [0,1]$,
\item[(F2)] $\mc F_{\alpha}(\mc E_1, \mc E_2)=\mc F_{\alpha}(\mc U \circ \mc E_1 \circ \mc V , \mc U\circ \mc E_2 \circ \mc V)$, for all unitary channels $\mc U$ and $\mc V$.
\end{itemize}
Additionally, for $\alpha\in[1/2,1)$,
\begin{itemize}
\item[(F3)] $\mc F_{\alpha}(\mc E_1, \mc E_2)=1$ iff $\mc E_1=\mc E_2$,
\item[(F4)] $\mc F_{\alpha}(\mc E_1, \mc E_2) \leq \mc F_{\alpha}(\mc E \circ \mc E_1,\mc E \circ \mc E_2)$,  for all channels $\mc E:\trh\rightarrow\mc T(\hi_S')$,
\item[(F5)] $\mc F_{\alpha}(\mc E_1, \mc E_2) \leq \mc F_{\alpha}(\mc E_1 \circ \mc E,\mc E_2 \circ \mc E)$, for all channels $\mc E:\mc T(\hi_S')\rightarrow\trh$.
\end{itemize}
\end{proposition}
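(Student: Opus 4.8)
The plan is to obtain all five properties by translating the entropy properties (S1)--(S4), already established for $S_\alpha$, into statements about the $\alpha$-fidelity $F_\alpha$ and then propagating them through the infimum in the definition \eqref{eq:chanafid}. The dictionary is $S_\alpha(\varrho_1 || \varrho_2) = \tfrac{1}{\alpha-1}\ln F_\alpha(\varrho_1, \varrho_2)$; since $\alpha-1 < 0$, the logarithm reverses every inequality. Thus (S1) becomes $0 \le F_\alpha(\varrho_1, \varrho_2) \le 1$ with $F_\alpha(\varrho_1, \varrho_2)=1$ iff $\varrho_1=\varrho_2$; (S3) becomes the unitary invariance $F_\alpha(\mc U(\varrho_1), \mc U(\varrho_2)) = F_\alpha(\varrho_1, \varrho_2)$; and the data-processing inequality (S4), valid for $\alpha\in[1/2,1)$, becomes the reversed (monotonicity) statement $F_\alpha(\varrho_1, \varrho_2) \le F_\alpha(\mc E(\varrho_1), \mc E(\varrho_2))$ already noted after \eqref{eq:mainineq2}. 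Throughout I restrict the infimum to pairs $\varrho_1\not\perp\varrho_2$, so the denominators are strictly positive.

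For (F1), every ratio in \eqref{eq:chanafid} is non-negative, so $\mc F_\alpha \ge 0$; evaluating at $\varrho_1=\varrho_2=\varrho$ gives numerator $F_\alpha(\mc E_1(\varrho), \mc E_2(\varrho)) \le 1$ over denominator $F_\alpha(\varrho, \varrho)=1$, so $\mc F_\alpha \le 1$. For (F2), I apply unitary invariance to strip $\mc U$ from the numerator, then substitute $\sigma_i=\mc V(\varrho_i)$; since a unitary channel is a bijection of $\sh$ and $F_\alpha(\varrho_1, \varrho_2)=F_\alpha(\mc V(\varrho_1), \mc V(\varrho_2))$, the infimum is unchanged. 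For (F3), the ``$\Leftarrow$'' direction uses monotonicity to show each ratio is $\ge 1$, so together with (F1) one gets $\mc F_\alpha(\mc E, \mc E)=1$; for ``$\Rightarrow$'', $\mc F_\alpha=1$ forces $F_\alpha(\mc E_1(\varrho), \mc E_2(\varrho)) \ge F_\alpha(\varrho, \varrho)=1$ at every diagonal pair, hence $F_\alpha(\mc E_1(\varrho), \mc E_2(\varrho))=1$, and by the equality case of (S1) this gives $\mc E_1(\varrho)=\mc E_2(\varrho)$ for all $\varrho$, i.e. $\mc E_1=\mc E_2$.

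Properties (F4) and (F5) are monotonicity statements resting on the reversed data-processing inequality. For (F4), I apply it with $\mc E$ to the outputs $\sigma_i=\mc E_i(\varrho_i)$, obtaining $F_\alpha(\mc E(\mc E_1(\varrho_1)), \mc E(\mc E_2(\varrho_2))) \ge F_\alpha(\mc E_1(\varrho_1), \mc E_2(\varrho_2))$ for every pair; dividing by the common denominator $F_\alpha(\varrho_1, \varrho_2)$ and taking the infimum yields the claim. For (F5), I substitute $\sigma_i=\mc E(\varrho_i)$ and use data processing twice: once to replace the denominator via $F_\alpha(\varrho_1, \varrho_2) \le F_\alpha(\sigma_1, \sigma_2)$ (which only increases each ratio), and then to bound $F_\alpha(\mc E_1(\sigma_1), \mc E_2(\sigma_2))/F_\alpha(\sigma_1, \sigma_2)$ from below by the full infimum $\mc F_\alpha(\mc E_1, \mc E_2)$, noting that the $\sigma_i$ range only over the image $\mc E(\sh) \subseteq \sh$.

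I expect the main obstacle to be the bookkeeping around the infimum rather than any single hard estimate: in (F5) one must ensure that the change of variables lands inside, but not necessarily onto, the state space, so the inequality must be steered in the safe direction (the infimum over a subset dominates the infimum over all states). A secondary point demanding care is the consistent reversal of inequalities due to the factor $1/(\alpha-1)<0$ when passing between $S_\alpha$ and $F_\alpha$, together with the exclusion of orthogonal input pairs to keep all ratios well defined.
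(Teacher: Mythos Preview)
Your proposal is correct and follows essentially the same route as the paper: translate (S1)--(S4) into statements about $F_\alpha$ via $S_\alpha=\tfrac{1}{\alpha-1}\ln F_\alpha$, then push them through the infimum, with (F5) handled by combining data processing on the denominator with the inclusion $\mc E(\mc S(\hi_S'))\subset\mc S(\hi_S)$. Two cosmetic points: in (F5) the parenthetical ``which only increases each ratio'' has the direction backwards (replacing the denominator by the larger $F_\alpha(\sigma_1,\sigma_2)$ \emph{decreases} the ratio, which is exactly what you need), and the ``second use of data processing'' is really just the definition of the infimum over the full state space; neither affects the validity of the argument.
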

\begin{proof}
Clearly $\mc F_{\alpha}(\mc E_1, \mc E_2)$ is non-negative by definition. Since $F_\alpha(\varrho_1,\varrho_2)\leq 1$ for all $\varrho_1, \varrho_2 \in \sh$, as verified from (S1) of $\alpha$-R\'{e}nyi divergence
, $\mc F_\alpha\big(\mc E_1, \mc E_2\big) \leq \inf_{\varrho_1, \varrho_2} F_{\alpha}(\varrho_1,\varrho_2)^{-1} =1$ proving (F1). Using the properties (S3) and (S4) of $\alpha$-R\'{e}nyi divergence, the monotonicity of the logarithm function and the fact $\sh \simeq \mc U \big( \sh \big)$ we conclude (F2) and (F4). Since $F_{\alpha} \big(\varrho_1,\varrho_2\big) \leq F_{\alpha} \big(\mc E(\varrho_1),\mc E(\varrho_2)\big)$, for all channels $\mc E$ when $\alpha\in[1/2,1)$, we confirm $\mc F_{\alpha}(\mc E, \mc E) = 1$. Assume on the other hand that $\mc F_{\alpha}(\mc E_1, \mc E_2) = 1$, but make a counter assumption $\mc E_1 \neq \mc E_2$. Then there exists a state $\varrho\in \sh$ such that $\mc E_1 (\varrho) \neq \mc E_2(\varrho)$ and therefore by (S1) $F_{\alpha} \big(\mc E_1(\varrho), \mc E_2(\varrho)\big)<1$. Since  $1=\mc F_{\alpha}(\mc E_1, \mc E_2) \leq F_{\alpha} \big(\mc E_1(\varrho), \mc E_2(\varrho)\big)<1$ this leads to a contradiction and we get (F3). Lastly we prove (F5). From the properties above we get the inequality 
\begin{align}
&\frac{F_{\alpha} \big((\mc E_1 \circ \mc E)(\varrho_1),(\mc E_2 \circ \mc E)(\varrho_2)\big) }{F_{\alpha} \big(\mc E(\varrho_1),\mc E(\varrho_2)\big)} \leq  \nonumber \\
&\frac{F_{\alpha} \big((\mc E_1 \circ \mc E)(\varrho_1),(\mc E_2 \circ \mc E)(\varrho_2)\big) }{F_{\alpha} \big(\mc \varrho_1,\varrho_2\big)}\,,
\end{align}
which together with $\mc E (\mc S(\hi_S')) \subset \mc S(\hi_S)$ implies that 
\begin{eqnarray}
\mc F_{\alpha}(\mc E_1, \mc E_2) &\leq &\inf_{\mc E(\varrho_1), \mc E(\varrho_2)} \frac{F_{\alpha} \big(\mc E_1( \mc E(\varrho_1) ),\mc E_2(\mc E (\varrho_2) )\big) }{F_{\alpha} \big(\mc E(\varrho_1),\mc E(\varrho_2)\big)} \nonumber \\
&= & \inf_{\varrho_1, \varrho_2} \frac{F_{\alpha} \big((\mc E_1 \circ \mc E)(\varrho_1),(\mc E_2 \circ \mc E)(\varrho_2)\big) }{F_{\alpha} \big(\mc E(\varrho_1),\mc E(\varrho_2)\big)} \nonumber \\
 &\leq & \inf_{\varrho_1,\varrho_2} \frac{F_{\alpha} \big((\mc E_1 \circ \mc E)(\varrho_1),(\mc E_2 \circ \mc E)(\varrho_2)\big) }{F_{\alpha} \big(\mc \varrho_1,\varrho_2\big)} \nonumber \\
 &=&  \mc F_{\alpha}(\mc E_1 \circ \mc E,\mc E_2 \circ \mc E).
\end{eqnarray}
\end{proof}

Different fidelity measures of pairs of quantum channels have been formulated including the process fidelity measuring the distinguishability of the corresponding Choi-states \cite{Raginsky2001}, the minimax fidelity \cite{Raginsky2005} with operational connection to the single-shot discrimination of channels done with the so-called process POVMs \cite{Mario2008, Mario2009} and the plethora of gate fidelities considered in \cite{Nielsen2005, NielsenInfo}. All of these channel fidelities have a common feature: they measure the distinguishability of pair of quantum channels in different operational scenarios. 
The $\alpha$-fidelity of channels we propose does not, however, admit operational connection to the above channel discrimination tasks due to a peculiar property the other channel fidelities do not share. Namely, as we will see shortly, it vanishes for any pair of different unitary channels. This unique feature makes our fidelity particularly appealing in quantum programming. 

We note that the double infima in Eq.\,\eqref{eq:chanafid} can make $\mc F_{\alpha}$ difficult to evaluate. It is, however, readily verified that $\mc F_{\alpha}\big(\mc E_1, \mc E_2\big)\leq \inf_{\varrho} F_{\alpha} \big(\mc E_1(\varrho), \mc E_2(\varrho) \big)$, where the latter quantity also sets an upper bound for the inducing states and it is much easier to handle. Furthermore, since it has been shown in \cite{Mosonyi2014} that $F_{\alpha}$ is jointly concave for $\alpha\in[1/2,1)$, that is in particular for an arbitrary $\varrho = \sum_i \lambda_i |i\rangle \langle i|$ it satisfies $F_{\alpha} \big(\mc E_1(\sum_i \lambda_i |i\rangle\langle i|), \mc E_2(\sum_i \lambda_i |i\rangle\langle i|) \big) \geq \sum_i \lambda_i F_{\alpha} \big(\mc E_1(|i\rangle\langle i|), \mc E_2(|i\rangle\langle i|) \big)$ for $\alpha\in[1/2,1)$, it can be concluded that it is enough to consider the infimum over the set of pure states. In the special case $\alpha=1/2$ the quantity $F_{\text{min}} (\mc E_1, \mc E_2) \doteq \inf_{\varrho} F_{1/2} \big(\mc E_1(\varrho), \mc E_2(\varrho) \big)$ is known as the {\it minimal gate fidelity} \cite{Nielsen2005, NielsenInfo}. We leave it as an open problem, whether also in $\mc F_{\alpha}$ it is sufficient to evaluate the infima over the set pure states or not.

Above an ordering $\mc F_{1/2} \leq F_{\text{min}}$ was pointed out, and one may wish to reveal such relations between other fidelities also, {\it e.g.} between $\alpha$-fidelity and the commonly used process fidelity defined for channels $\mc E_i: \mc T(\hi_S) \to \mc T(\hi_S), $ $i=1,2,$ via $F_{\text{proc} } (\mc E_1, \mc E_2) \doteq F_{1/2} ( \mc E_1 \otimes \mc I (\Omega), \mc E_2 \otimes \mc I (\Omega) )$, where $\mc I: \mc T(\hi_S) \to \mc T(\hi_S)$ is the identity channel and $\Omega = \frac{1}{\dim (\hi_S)} \sum_{n,m=1}^{\dim (\hi_S)} |n n\rangle \langle m m|$ is the maximally entangled state in $\hi_S \otimes \hi_S$ with respect to an orthonormal basis $\{ \ket n,\, n=1,\ldots \dim (\hi_S) \}$. Such an ordering cannot, however, be established. Namely, it can be confirmed that for the one-qubit channels $\mc E_1=\mc I$ and $\mc E_2(\varrho) = 1/3 \sum_{i=1}^3 \sigma_i \varrho \sigma_i^\dagger$, where $\sigma_i$, $i=1,2,3,$ are the Pauli matrices, the process fidelity vanishes, $F_{\text{proc} } (\mc E_1, \mc E_2) = 0$, or more generally $F_{\alpha} \big( \mc E_1 \otimes  \mc I, \mc E_2 \otimes \mc I(\Omega) \big)=0$ for all $\alpha \in (0,1)$. On the other hand, by writing $\varrho_{\vec{m}} = \frac{1}{2} ( \id + \vec m \cdot \vec \sigma),$ where $\vec m \cdot \vec \sigma = \sum_{i=1}^3 m_i \sigma_i$ and $\vec{m}\in \R^3$ has $|| \vec m || \leq 1$, we see that $\mc E_2(\varrho_{\vec{m}}) =  \frac{2}{3} \cdot (\frac 12 \id) + \frac 13 \cdot \varrho_{-\vec m}$ and therefore
\begin{eqnarray}
\mc F_{\alpha} (\mc E_1, \mc E_2) &=& \inf_{\varrho_{\vec n}, \varrho_{\vec m}} \frac{F_{\alpha} ( \varrho_{\vec n}, \frac{2}{3} \cdot (\frac 12 \id) + \frac 13 \cdot \varrho_{-\vec m})}{F_{\alpha}( \varrho_{\vec n}, \varrho_{\vec m})} \nonumber \\
&\geq & \frac 23 \inf_{\varrho_{\vec n}} F_{\alpha} ( \varrho_{\vec n}, \frac 12 \id )  + \frac 13 \inf_{\varrho_{\vec n}, \varrho_{\vec m}}  \frac {F_{\alpha}(\varrho_{\vec n}, \varrho_{-\vec m})}{F_{\alpha}(\varrho_{\vec n}, \varrho_{\vec m})} \nonumber \\
&=& \frac 23 (\frac 12)^{1-\alpha} \neq 0 \, ,
\end{eqnarray}
for $\alpha \in [1/2, 1)$, where the estimation follows from the joint concavity of the $\alpha$-fidelity \cite{Mosonyi2014}. Therefore, there exist channels $\mc E_1$ and $\mc E_2$ for which $F_{\text{proc}} (\mc E_1, \mc E_2) < \mc F_{\alpha} (\mc E_1, \mc E_2)$ for all $\alpha \in [1/2, 1)$. However, in Prop.\,\ref{prop:programming} we will show that $\mc F_{\alpha} (\mc U_1, \mc U_2) =0$ for any pair of unitary channels $\mc U_1 \neq \mc U_2$, which does not hold for $F_{\text{proc} }$ in general \cite{Raginsky2001}. In conclusion, there is no ordering between $\mc F_{\alpha}$ and $F_{\text{proc}}$. Furthermore, since the process fidelity and the minimax fidelity \cite{Raginsky2005} $F_{\text{mm}}(\mc E_1, \mc E_2) \doteq \inf_{\varrho \in \mc S(\hi_S)} \tr{|\sqrt{ \mc E_1 \otimes \mc I (\Omega)} \, \id \otimes \varrho \, \sqrt{ \mc E_2 \otimes \mc I (\Omega)}|}$ satisfy $F_{\text{mm}} \leq F_{\text{proc}}$, and since there exist unequal unitary channels $\mc U_i$, $i=1,2$, such that $F_{\text{mm}}(\mc U_1, \mc U_2) \neq 0$, it may be concluded that the channel fidelities $F_{\text{mm}}$ and $\mc F_{\alpha}$ are not related either. As a corollary, we can prove that the $\alpha$-fidelity of channels is not (sub)multiplicative for separable channels.
\begin{proposition}
$\mc F_\alpha$ does not satisfy the multiplicativity property 
\begin{eqnarray}
\mc F_{\alpha} \big( \mc E_1 , \mc E_2 \big) \,  \mc F_{\alpha} \big(  \mc C_1, \mc C_2 \big)  = \mc F_{\alpha} \big( \mc E_1 \otimes  \mc C_1, \mc E_2 \otimes \mc C_2 \big) \, ,
\end{eqnarray} for all channels $\mc E_i: \mc T(\hi_S) \to \mc T(\hi_S)$, $\mc C_i: \mc T(\hi'_S) \to \mc T(\hi'_S)$, $i=1,2$. 
\end{proposition}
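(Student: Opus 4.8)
The statement is an equality asserted for \emph{all} channels, so to refute it the plan is to exhibit a single quadruple of channels for which it fails; it is cleanest to show that a general submultiplicative bound can be made strict. First I would record the easy half: restricting the double infimum defining $\mc F_\alpha(\mc E_1\otimes\mc C_1,\mc E_2\otimes\mc C_2)$ to product inputs $\sigma_i=\varrho_i\otimes\tau_i$ and using the tensor additivity (S2) — equivalently the factorization $F_\alpha(\varrho_1\otimes\tau_1,\varrho_2\otimes\tau_2)=F_\alpha(\varrho_1,\varrho_2)\,F_\alpha(\tau_1,\tau_2)$ — the defining ratio factorizes into the product of the two single-system ratios. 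Since the infimum of a product of nonnegative quantities in independent variables is the product of the infima, and the full infimum runs over \emph{all} (entangled as well as product) inputs, this yields
\[
\mc F_\alpha(\mc E_1\otimes\mc C_1,\mc E_2\otimes\mc C_2)\le \mc F_\alpha(\mc E_1,\mc E_2)\,\mc F_\alpha(\mc C_1,\mc C_2).
\]
Thus multiplicativity fails exactly when this inequality is strict, and any such gap must be produced by entangled probe inputs.

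Next I would reuse the one-qubit channels already analyzed above: set $\mc E_1=\mc I$ and $\mc E_2(\varrho)=\tfrac13\sum_{i=1}^3\sigma_i\varrho\sigma_i$, and take $\mc C_1=\mc C_2=\mc I$. By the value computed earlier, $\mc F_\alpha(\mc E_1,\mc E_2)=\tfrac23(\tfrac12)^{1-\alpha}$, while (F3) gives $\mc F_\alpha(\mc I,\mc I)=1$; hence the left-hand side of the claimed identity equals $\tfrac23(\tfrac12)^{1-\alpha}>0$ for every $\alpha\in[1/2,1)$.

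For the tensor side I would bound $\mc F_\alpha(\mc E_1\otimes\mc I,\mc E_2\otimes\mc I)$ from above by evaluating its defining ratio at the single entangled input $\sigma_1=\sigma_2=\Omega$. The denominator is $F_\alpha(\Omega,\Omega)=1$, while the numerator $F_\alpha\big(\mc E_1\otimes\mc I(\Omega),\mc E_2\otimes\mc I(\Omega)\big)$ vanishes because the two Choi states are orthogonal: $\mc E_1\otimes\mc I(\Omega)=\Omega$ projects onto the maximally entangled vector, whereas $\mc E_2\otimes\mc I(\Omega)$ is supported on the span of the three mutually orthogonal Bell vectors $(\sigma_i\otimes\id)\ket{\Omega}$, $i=1,2,3$ — the very orthogonality noted above when discussing $F_{\text{proc}}$. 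Since $\mc F_\alpha\ge0$ and the infimum already attains $0$ at this input, $\mc F_\alpha(\mc E_1\otimes\mc I,\mc E_2\otimes\mc I)=0$. Therefore
\[
\tfrac23(\tfrac12)^{1-\alpha}=\mc F_\alpha(\mc E_1,\mc E_2)\,\mc F_\alpha(\mc C_1,\mc C_2)>0=\mc F_\alpha(\mc E_1\otimes\mc C_1,\mc E_2\otimes\mc C_2),
\]
which contradicts the asserted multiplicativity for all $\alpha\in[1/2,1)$.

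The only genuinely delicate point, and the one I would state most carefully, is the role of entanglement: the \emph{product} of channel fidelities effectively restricts attention to product inputs, whereas the \emph{tensor}-channel fidelity optimizes over all joint states. The counterexample works precisely because the maximally entangled input $\Omega$ renders the two locally processed outputs orthogonal — the same mechanism responsible for the vanishing process fidelity discussed earlier — while no product input can achieve this. Everything else is a direct reuse of computations already in hand, so I expect no real obstacle beyond invoking the established orthogonality of the Choi states.
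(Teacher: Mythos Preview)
Your argument is correct and follows essentially the same route as the paper: both establish the easy submultiplicative direction by restricting to product inputs, then break equality with the specific choice $\mc E_1=\mc I$, $\mc E_2(\varrho)=\tfrac13\sum_i\sigma_i\varrho\sigma_i$, $\mc C_1=\mc C_2=\mc I$, using that the two Choi states are orthogonal (so evaluating the tensor-channel ratio at $\Omega$ gives zero) while $\mc F_\alpha(\mc E_1,\mc E_2)>0$. One small slip: the paper only established the \emph{lower bound} $\mc F_\alpha(\mc E_1,\mc E_2)\ge\tfrac23(\tfrac12)^{1-\alpha}$, not equality, but since your argument needs only strict positivity this does not affect the conclusion.
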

\begin{proof}
It is easily seen, that $\mc F_{\alpha} ( \mc E_1 , \mc E_2) \, \mc F_{\alpha} (  \mc C_1, \mc C_2 )  \geq \mc F_{\alpha} ( \mc E_1 \otimes  \mc C_1, \mc E_2 \otimes \mc C_2 )$, since $\mc S(\hi_S) \otimes \mc S(\hi'_S) \subset \mc S (\hi_S \otimes \hi'_S)$. The other direction, however, leads to a contradiction. Namely, assume that $\mc F_{\alpha} ( \mc E_1 , \mc E_2) \mc F_{\alpha} (  \mc C_1, \mc C_2 )  \leq \mc F_{\alpha} ( \mc E_1 \otimes  \mc C_1, \mc E_2 \otimes \mc C_2 )$ holds for all channels $\mc E_i: \mc T(\hi_S) \to \mc T(\hi_S)$, $\mc C_i: \mc T(\hi'_S) \to \mc T(\hi'_S)$, $i=1,2$. Then in particular
\begin{eqnarray}
\mc F_{\alpha} \big( \mc E_1 , \mc E_2\big) &\leq & \mc F_{1/2} \big( \mc E_1 \otimes \mc I , \mc E_2 \otimes \mc I \big) \nonumber \\
&\leq & F_{\alpha}\big( \mc E_1 \otimes \mc I (\Omega), \mc E_2 \otimes \mc I(\Omega) \big) \, ,
\end{eqnarray} 
which we have noted before does not hold in general. 
\end{proof}
\noindent We observe, that similar calculations to those above can be used to verify that $F_{\text{min}}$ is not multiplicative either. 

So far we have only considered static quantum channels, however, the calculations above hold as well for dynamical maps. The emerging inequality 
\begin{eqnarray}\label{eq:mainineq}
F_{\alpha}\big(\xi_1,\xi_2\big) \leq \mc F_{\alpha}\big(\mc E_1^{(t)}, \mc E_2^{(t)}\big), \,\, \alpha\in[1/2,1), \,\, t\ge 0,
\end{eqnarray} has the following twofold interpretation. Firstly, if the initial environmental states $\xi_i$, $i=1,2$, induce dynamics $\mc E^{(t)}_i$, respectively, then $\mc F_{\alpha}\big(\mc E^{(t)}_1, \mc E^{(t)}_2\big)$ cannot decrease below $F_{\alpha}\big(\xi_1,\xi_2\big)$ for any $t\ge 0$. This sets limitations on what dynamics are compatible with the inducing environmental states. On the other hand, since $\inf_{t\ge 0} \mc F_{\alpha}\big(\mc E_1^{(t)}, \mc E_2^{(t)}\big)$ majorises $F_{\alpha}\big(\xi_1,\xi_2\big)$, it is possible to gain some information about the environment by measuring the open system dynamics, {\it i.e.} by probing. Remarkably, these two strategies work even if the environment and the interaction were not specified. The next section is devoted to examples that demonstrate these properties.

\section{Applications of the inequality}\label{sec:applic}
In the following we will present three applications of Ineq.\,\eqref{eq:mainineq2} for quantum programming and quantum probing purposes.
Figure \ref{fig:progprob} illustrates the quantum probing and quantum programming protocols.
Despite their apparent similarity, the main objectives in the two protocols are complementary: 
in quantum programming the aim is to induce different channels for the data register in a controlled way, 
whereas in quantum probing the goal is to extract information of the system of interest which induced the dynamics. 

\begin{figure}
\includegraphics[width=0.45\textwidth]{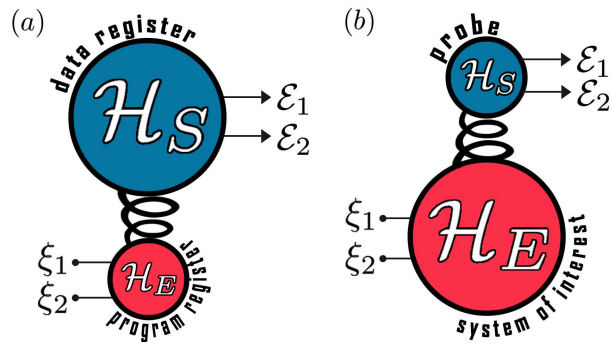}
\caption{{\bf Quantum programming and probing.} (a) In a quantum programming scheme one varies the state $\xi_i$ of the program register ($\hi_E$) coupled to the data register ($\hi_S$). 
By varying $\xi_i$ we can induce multiple different channels $\mc E_i$ on the data register. In this scenario emphasis in interest is on $\hi_S$.
(b) In a quantum probing scheme, on the other hand, the system of interest is $\hi_E$ which is coupled to the probe $\hi_S$. 
Different initial states, $\xi_i$, of the system lead to different dynamics, $\mc E_i$, of the probe.
By measuring the probe dynamics $\mc E_i$ we can obtain some information of the corresponding inducing state $\xi_i$.}
\label{fig:progprob}
\end{figure}

\subsection{Quantum programming}
The altering of the inducing state $\xi\in\se$ in Eq.\,\eqref{eq:redchannel} is an action known as {\it quantum programming} and accordingly, the pair $\langle \hi_E, \mc U \rangle$ formed by the program space $\hi_E$ and the (unitary) coupling $\mc U$ between $\hi_S$ and $\hi_E$ is called a {\it programmable processor}. 
%

From the inequality
\begin{eqnarray}\label{eq:progineq}
F_{\alpha}\big(\xi_1,\xi_2\big) \leq \mc F_{\alpha}\big(\mc E_1, \mc E_2\big), \qquad \alpha \in [1/2,1),
\end{eqnarray}
we get a limit for the fidelity of the inducing states in terms of the induced channels. We recall once again that a sharp equality between the inducing states and the induced channels has been introduced in \cite{Buzek2006}. This equality depends on the programmable processor. In Ineq.\,\eqref{eq:progineq}, on the contrary, no such dependence is present.

In the case $\alpha=1/2$, Ineq.\,\eqref{eq:progineq} has a particularly nice interpretation, {\it viz.} $F_{1/2}(\xi_1,\xi_2)$ measures the distinguishability of the states $\xi_i$, $i=1,2$. Also $\mc F_{1/2}$ has a similar interpretation: $\mc F_{1/2}(\mc E_1,\mc E_2)=\inf_{\varrho_1,\varrho_2} \frac{F_{1/2}\left(\mc E_1(\varrho_1),\mc E_2(\varrho_2) \right)}{F_{1/2}(\varrho_1,\varrho_2)}$ quantifies the mutual {\it distinguishability of the channels} $\mc E_1,\mc E_2$ by measuring how distinguishable these channels can render a pair of initially non-orthogonal states $\varrho_1$ and $\varrho_2$. The inequality then naturally stems from the fact that the distinguishabilities should not increase under physical processes, such as programming. We connect Ineq.\,\eqref{eq:progineq} to
earlier known programming results \cite{NielsenChuang, Hillery02, Hillery06, HeiTuk2015} by proving that unitary channels are, in this sense, the most distinguishable.

\begin{proposition}\label{prop:programming}
Let $\mc U_1$ and $\mc U_2$ be two unitary channels. If $\mc U_1\neq \mc U_2$ then $\mc F_{\alpha}(\mc U_1, \mc U_2)=0$ for all $\alpha\in (0,1)$. In particular, the inducing states of different unitary channels must be orthogonal regardless of the choice of the processor $\langle \hi_E, \mc U\rangle$.
\end{proposition}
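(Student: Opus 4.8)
The plan is to evaluate $\mc F_\alpha(\mc U_1,\mc U_2)$ directly from its defining infimum~\eqref{eq:chanafid}, restricting attention to \emph{pure} input states, which are admissible elements of $\sh$. For pure states $\varrho_1=\kb{\psi}{\psi}$ and $\varrho_2=\kb{\phi}{\phi}$ the expression for $F_\alpha$ collapses, since any positive power of a rank-one projector equals the projector itself; a one-line computation using $\kb{\phi}{\phi}\,\kb{\psi}{\psi}\,\kb{\phi}{\phi}=|\ip{\phi}{\psi}|^2\,\kb{\phi}{\phi}$ then gives $F_\alpha(\kb{\psi}{\psi},\kb{\phi}{\phi})=|\ip{\psi}{\phi}|^{2\alpha}$. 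Writing $\mc U_i(\varrho)=U_i\varrho U_i^\dagger$ and setting $W\doteq U_1^\dagger U_2$, the ratio in~\eqref{eq:chanafid} evaluated on pure inputs $\ket{\varphi_1},\ket{\varphi_2}$ becomes $\big(\,|\ip{\varphi_1}{W\varphi_2}|/|\ip{\varphi_1}{\varphi_2}|\,\big)^{2\alpha}$, valid whenever $\ip{\varphi_1}{\varphi_2}\neq 0$. Notice this is independent of $\alpha$ in the sense that it vanishes for one $\alpha$ iff it vanishes for all.

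The core of the argument is to exhibit a pair for which this ratio is exactly zero, i.e. for which the numerator vanishes while the denominator stays positive. Because $\mc U_1\neq\mc U_2$ as channels, the unitaries cannot differ by a global phase, so $W$ is not a scalar multiple of $\id$; consequently not every vector can be an eigenvector of $W$, and there is a unit vector $\ket{\varphi_2}$ with $W\ket{\varphi_2}$ linearly independent of $\ket{\varphi_2}$. I would then choose $\ket{\varphi_1}$ orthogonal to $W\ket{\varphi_2}$ but \emph{not} orthogonal to $\ket{\varphi_2}$: such a vector exists because $\{\ket{\varphi}:\ip{\varphi}{W\varphi_2}=0\}$ and $\{\ket{\varphi}:\ip{\varphi}{\varphi_2}=0\}$ are distinct hyperplanes (their normals $W\ket{\varphi_2}$ and $\ket{\varphi_2}$ being non-parallel), so the former is not contained in the latter. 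The cleanest way to pin this down is to work inside the two-dimensional span of $\ket{\varphi_2}$ and $W\ket{\varphi_2}$. With such $\ket{\varphi_1}$ the numerator $\ip{\varphi_1}{W\varphi_2}$ is zero while $\ip{\varphi_1}{\varphi_2}\neq 0$, so the ratio is zero. This geometric step—simultaneously killing the numerator and keeping the denominator nonzero, together with the clean extraction of ``$W\ne\lambda\id$'' from $\mc U_1\neq\mc U_2$—is the part I expect to require the most care, though once reduced to the two-dimensional plane it is elementary.

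Since $\mc F_\alpha\ge 0$ by (F1) and we have produced an admissible pair making the ratio vanish, the infimum over $\sh\times\sh$ is zero, giving $\mc F_\alpha(\mc U_1,\mc U_2)=0$; as the construction never used $\alpha$, this covers all $\alpha\in(0,1)$. For the final assertion I would invoke the main inequality~\eqref{eq:progineq} (equivalently~\eqref{eq:mainineq}): if $\xi_1,\xi_2\in\se$ induce $\mc U_1,\mc U_2$ through any processor, then $F_\alpha(\xi_1,\xi_2)\le\mc F_\alpha(\mc U_1,\mc U_2)=0$, forcing $F_\alpha(\xi_1,\xi_2)=0$. By the definition of the quantum $\alpha$-R\'enyi divergence in~\eqref{eq:defrenyi}, $F_\alpha(\xi_1,\xi_2)=0$ is equivalent to $S_\alpha(\xi_1||\xi_2)=\infty$, which by that same definition holds precisely when $\xi_1\perp\xi_2$. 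This yields the orthogonality claim, manifestly without any reference to the processor $\langle\hi_E,\mc U\rangle$.
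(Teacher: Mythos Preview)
Your proof is correct and follows essentially the same route as the paper's: reduce to pure inputs, set $W=U_1^\dagger U_2$, and exhibit non-orthogonal unit vectors $|\varphi_1\rangle,|\varphi_2\rangle$ with $\langle\varphi_1|W\varphi_2\rangle=0$. The only notable difference is that the paper constructs the second vector explicitly (its Eq.~\eqref{eq:varphi2}), a formula it then reuses in the approximate-programming bound~\eqref{eq:approbound}, whereas your hyperplane argument is a cleaner existence proof and handles all $\alpha\in(0,1)$ in one stroke via $F_\alpha(\kb{\psi}{\psi},\kb{\phi}{\phi})=|\ip{\psi}{\phi}|^{2\alpha}$.
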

\begin{proof}
We first prove the proposition for $\alpha=1/2$. Using the properties of fidelity of pure states we have $F_{1/2}\big(\mc U_1 ( |\varphi_1 \rangle \langle \varphi_1| ), \mc U_2 ( |\varphi_2 \rangle \langle \varphi_2| ) \big) = | \langle \varphi_1 |  U_1^{\dagger} U_2 \varphi_2 \rangle |$. It suffices to show that for any operator $U \neq c \cdot \id_{\hi_S}$, $|c| =1$, there exists non-orthogonal unit vectors $|\varphi_1\rangle,|\varphi_2 \rangle \in \hi_S$ such that $\langle  \varphi_1 | U \varphi_2 \rangle =0$. Make a counter-assumption that $ \langle  \varphi_1 | U \varphi_2 \rangle \neq 0$ whenever $\langle \varphi_1 | \varphi_2 \rangle \neq 0$. One can express $|\varphi_2\rangle = \alpha |\varphi_1\rangle + \beta |\eta\rangle$, where $\langle \varphi_1 | \eta\rangle =0$ and $|\alpha|^2 + |\beta|^2 =1$. Fix $|\varphi_1\rangle$ such that $|\langle \varphi_1| U \varphi_1\rangle| \neq 1$. We can then choose a unit vector $|\eta\rangle = \big(U^{\dagger} \varphi_1 - \langle \varphi_1| U^{\dagger} \varphi_1 \rangle \varphi_1 \big)/\sqrt{1-|\langle \varphi_1| U \varphi_1 \rangle |^2}$ orthogonal to $\varphi_1$. With the above choices we have $0\neq \langle  \varphi_1 | U \varphi_2 \rangle = \alpha \langle \varphi_1 | U \varphi_1 \rangle + \beta \sqrt{1-|\langle \varphi_1| U \varphi_1 \rangle|^2}$ for all $\alpha$ and $\beta$. Since $\alpha = \sqrt{1-|\langle \varphi_1| U \varphi_1 \rangle|^2}$ and $\beta = - \langle \varphi_1 | U \varphi_1\rangle$ contradict this while satisfying $|\alpha|^2 + |\beta|^2 =1$, the counter-assumption is falsified. Going the other way around, for any unit vector $|\varphi_1\rangle$ such that $|\langle \varphi_1| U \varphi_1\rangle| \neq 1$ we find that the unit vector 
\begin{align}\label{eq:varphi2}\nonumber
|\varphi_2\rangle &= \sqrt{1-|\langle \varphi_1 | U \varphi_1 \rangle |^2} \, |\varphi_1\rangle \\
&-\frac{\langle \varphi_1 | U \varphi_1 \rangle}{\sqrt{1-|\langle \varphi_1 | U \varphi_1 \rangle |^2}} \big( U^{\dagger} |\varphi_1\rangle - \langle \varphi_1 | U^{\dagger} \varphi_1 \rangle |\varphi_1\rangle\big)
\end{align} which satisfies $\langle \varphi_1 | U \varphi_2 \rangle =0$ and $\langle \varphi_1 | \varphi_2 \rangle \neq 0$. Fixing $U = U_1^{\dagger} U_2$ proves that $\mc F_{1/2} \big( \mc U_1, \mc U_2 \big) = 0$. For general $\alpha\in(0,1)$ the claim follows from the above considerations when noticing that $F_{\alpha}\big(\varrho_1,\varrho_2)=0$ if and only if $\varrho_1 \perp \varrho_2$; see Eq.\eqref{eq:defrenyi}.

\end{proof}

It should be stressed that for any pair of quantum channels $\mc E_i$, $i=1,2$ acting on a same Hilbert space one can find a processor $\langle \hi_E, \mc U\rangle$ from which the two channels can be induced. To see this, let us recall that any channel $\mc E:\trh \rightarrow \trh$ admits a Stinespring dilation of the form $\mc E(\varrho) = \text{tr}_{\hi_S} \left[ G \, \varrho \otimes |\eta \rangle \langle \eta | \, G^{\dagger}\right]$, where $G$ is a unitary on $\hi_S\otimes \hi_S$ and $|\eta\rangle\in\hi_S$ is some fixed unit vector. We can therefore assume without loss of generality that $\mc E_i(\varrho) = \text{tr}_{\hi_S}\left[ G_i \, \varrho \otimes |\eta \rangle \langle \eta | \, G_i^{\dagger}\right]$ for unitaries $G_i$, $i=1,2$. Consider any pair of orthogonal unit vectors $|\phi_1\rangle$ and $|\phi_2\rangle$, $\langle\phi_1| \phi_2\rangle =0$ in a two dimensional Hilbert space $\C^2$. Then it is easily verified that $U = G_1 \otimes |\eta\rangle\langle \eta| \otimes |\phi_1\rangle\langle\phi_1|+G_2 \otimes |\eta\rangle\langle \eta| \otimes |\phi_2\rangle\langle\phi_2|+\id_{\hi_S} \otimes \big( \id_{\hi_S}- |\eta\rangle\langle \eta|\big) \otimes \id_{\C^2}$ defines a unitary operator on $\hi_S \otimes \big( \hi_S \otimes \C^2\big)$. It follows that, by defining $\hi_E = \hi_S\otimes \C^2$ and $\mc U(\cdot) = U \cdot U^{\dagger}$, the vectors $|\eta \otimes \phi_i\rangle$ program the two channels $\mc E_i$, $i=1,2$, via processor $\langle \hi_E, \mc U\rangle$. As a result, any pair of quantum channels, even identical ones, can be programmed with orthogonal programming states. On the other hand, we showed above that for identical quantum channels the channel $\alpha$-fidelity reaches its maximum value 1. Importantly, this example alludes that is impossible to generally talk about the tightness of Ineq.\,\eqref{eq:progineq}.

In connection to the above, there are two complementary questions related to the tightness of Ineq.\,\eqref{eq:progineq} that we would like to address. Firstly, one may wonder if for any pair of programming states $\xi_i\in \mc S(\hi_E)$ there exists a processor $\langle \hi_E, \mc U\rangle$ realizing channels $\mc E_i$, $i=1,2$, respectively, that would satisfy $F_{\alpha} (\xi_1, \xi_2) = \mc F_{\alpha}\big(\mc E_1, \mc E_2 \big)$. The answer is affirmative, namely any quantum state $\xi\in \mc S(\hi_E)$ can be considered as the preparation channel $\mc E_\xi: \mc T(\hi_E) \to \C\{\xi\}$, $\mc E_\xi (\varrho) = \xi,$ for all $\varrho\in \mc S(\hi_E)$. Moreover, any such channel may be programmed with state $\xi\in \mc S(\hi_E)$ by using the processor $\langle \hi_E, \mc U_{\text{SWAP}} \rangle$, where $\mc U_{\text{SWAP}}(T_1 \otimes T_2) = T_2 \otimes T_1$, for all $T_i \in \mc T(\hi_E)$, that is $\mc E_\xi(\varrho) = \text{tr}_{\hi_E}\left[ \mc U_{\text{SWAP}}( \varrho \otimes \xi)\right]$, for all $\varrho\in \mc S(\hi_E)$. It is evident that in this case $\mc F_{\alpha} \big( \mc E_{\xi_1}, \mc E_{\xi_2} \big) = F_{\alpha} \big(\xi_1,\xi_2 \big)$, for all $\alpha\in(0,1)$. It is natural to ask if the contrary also holds, that is, if for any channels $\mc E_i$ one can find some processor and programming states $\xi_i$, $i=1,2$, respectively, such that $F_{\alpha} (\xi_1, \xi_2) = \mc F_{\alpha}\big(\mc E_1, \mc E_2 \big)$? It turns out that the answer in this case is negative. To confirm this, we recall that the programming states $\xi_i\in \mc S(\hi_E)$, $i=1,2,$ resulting into a unitary and any other different extremal channel are necessarily orthogonal to each other, that is, in particular, $F_{1/2} (\xi_1, \xi_2) =0$; see Ref.\,\cite[Prop.\,8]{HeiTuk2015}. It is, however, possible find a unitary channel $\mc U$ and an extremal channel $\mc E\neq \mc U$ with $\mc F_{1/2}\big(\mc U, \mc E) \neq 0$. Indeed, let us choose $\mc U$ as the one-qubit identity channel and for any $\varrho_{\vec{m}} \in \mc S(\C^2)$ define $\mc E(\varrho_{\vec{m}}) = \frac{1}{2}(\id + m_2 \, \sigma_2)$. The extremality of $\mc E$ can be verified by using results of Ref.\,\cite{Ruskai2002}, and furthermore using inequalities found in Ref.\,\cite{Miszczak09} one can shown that
\begin{eqnarray}
& &\mc F_{1/2}\big(\mc U, \mc E) \geq \nonumber \\
& &  \inf_{\substack{ ||\vec n||\leq 1, \\ ||\vec m||\leq 1\phantom{,}}} \frac{\tr{\varrho_{\vec n} \, \mc E(\varrho_{\vec{m}} )}}{\tr{\varrho_{\vec n} \, \varrho_{\vec m} } + \sqrt{(1- \tr{\varrho_{\vec n}^2}) (1- \tr{\varrho_{\vec m}^2})}} = 1/2\,. \nonumber \\
\end{eqnarray} We conclude that Ineq.\,\eqref{eq:progineq} fails to be tight in general.

\begin{figure}[t!]
\hspace{-1cm}\includegraphics[width=0.45\textwidth]{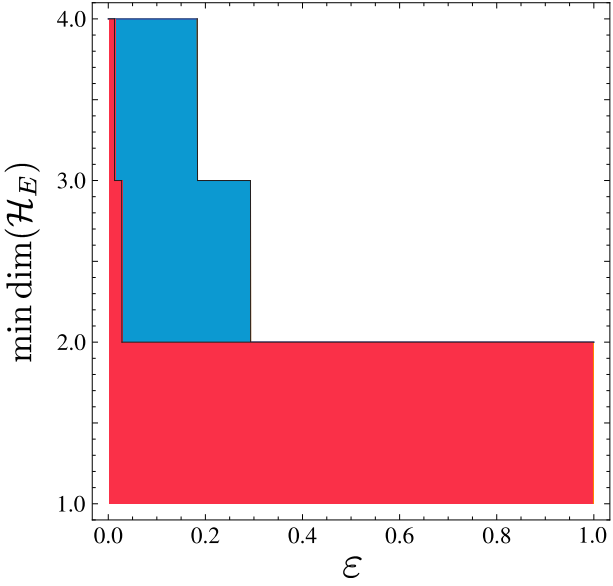}
\caption{{\bf Dimension of programmable processor in approximate programming.} We consider the dimension of a processor capable of implementing the four noisy unitary channels $\mc E_i: \varrho \mapsto (1-\varepsilon) \sigma_i \varrho \sigma_i^\dagger + \varepsilon \frac 12 \id,$ $i=0,\ldots, 3$, in terms of the noise parameter $\varepsilon\in[0,1]$. It may be confirmed that the process fidelity satisfies $F_{\text{proc}}\big(\sigma_i \cdot \sigma_i^\dagger, \mc E_i \big) \geq 1-\varepsilon$ for all $i= 0, \ldots, 3$. Accordingly, an analysis done in Ref.\,\cite{Hillery06} implies that any processor implementing $\mc E_i$, $i=0, \ldots, 3$, is at least four dimensional for $0 \leq \varepsilon < \left[3 (13 + 2 \sqrt{42}) \right]^{-1}$, at least three dimensional for $\left[3 (13 + 2 \sqrt{42}) \right]^{-1} \leq \varepsilon < \left[2 (9 + 4 \sqrt{5} )\right]^{-1}$ and at least two dimensional for $\left[2 (9 + 4 \sqrt{5}) \right]^{-1} \leq \varepsilon< 1$ excluding the dimensions within the red area. However, our approach results to tighter limits which we have denoted in blue: the processor is necessarily at least four dimensional for $0\leq \varepsilon < \frac13 (3 - \sqrt 6)$, at least three dimensional for $\frac13 (3 - \sqrt 6) \leq \varepsilon < \frac12 (2 - \sqrt 2)$ and at least two dimensional for $\frac12 (2 - \sqrt 2) \leq \varepsilon< 1$. } 
\label{fig:apprprog}
\end{figure}
Nevertheless, the channel $\alpha$-fidelity $\mc F_\alpha$ is a genuinely important figure of merit in quantum programming, as we will see in the following. According to the previous proposition, programming $N$ different unitary channels requires $\dim(\hi_E) \geq N$. The processor's dimension may, however, be lowered if some error is accepted. Let us denote $d=\dim(\hi_S)$. Suppose we wish to program the approximate unitary channels $\mc E_{U_i}(\varrho) = (1-\varepsilon) \mc U_j(\varrho) + \varepsilon \frac{1}{d} \id_{\hi_S}$ with programming vectors $|\phi_i\rangle$, respectively, where $U_i$, $i=1,\ldots, N$ are unequal unitaries and $\varepsilon \in [0,1]$ describes a fixed error rate. From Ref.\,\cite{Miszczak09} we find that $F_{1/2}\big(\mc E_{U_j}(\varrho_1),\mc E_{U_k}(\varrho_2)\big) \leq \tr{\mc E_{U_j}(\varrho_1)\,\mc E_{U_k}(\varrho_2)} + \sqrt{\big(1-\tr{\mc E_{U_j}(\varrho_1)^2}\big)\big(1-\tr{\mc E_{U_k}(\varrho_2)^2}\big)}.$ Choosing $\varrho_i = |\varphi_i\rangle \langle \varphi_i|$, $i=1,2$ simplifies this inequality yielding $\frac{F_{1/2}\big(\mc E_{U_j}(\varrho_1),\mc E_{U_k}(\varrho_2)\big)}{F_{1/2}(\varrho_1, \varrho_2)} \leq \frac{(1-\varepsilon)^2 |\langle U_j \varphi_1| U_k \varphi_2 \rangle|^2}{|\langle \varphi_1| \varphi_2 \rangle|}+\frac{(2-\varepsilon)\varepsilon}{|\langle \varphi_1| \varphi_2 \rangle|}.$
In particular, choosing $\varphi_2$ as in Eq.\,\eqref{eq:varphi2} for $U=U_j^{\dagger}U_k$ implies 
\begin{eqnarray}\label{eq:approbound}
F_{1/2} (\phi_j,\phi_k) &\leq & \inf_{|\varphi_1\rangle} \frac{(2-\varepsilon)\varepsilon}{\sqrt{1-|\langle U_j \varphi_1| U_k \varphi_1 \rangle|^2}} \nonumber \\
& =& \frac{(2-\varepsilon)\varepsilon}{\sqrt{1-\inf_\varrho F_{1/2}\big(\mc U_j(\varrho),\mc U_k(\varrho)\big) }} \doteq g_{jk}(\varepsilon), \nonumber \\
\end{eqnarray} 
when $j\neq k$. We may use this bound to solve for the largest set of linearly independent programming vectors using the following result: if $|\phi_j\rangle,\, j=1,\ldots,K$, are unit vectors and $F_{1/2}(\phi_j, \phi_k) \leq \frac{1}{K-1}$ whenever $j\neq k$, then the vectors $|\phi_j\rangle, \, j=1,\ldots,K$, are linearly independent \cite{Hillery02}. Let us denote with $K_\varepsilon$ the largest integer such that $K_\varepsilon < \max_{j\neq k \in\{1,\ldots N\}}\big(1/g_{jk}(\varepsilon)+1\big)$. The previous result then implies that any set of vectors whose size is less than or equal to $K_\varepsilon$ is linearly independent. Therefore, if $N \leq K_\varepsilon$, then all of the programming vectors are linearly independent and $\dim(\hi_E) \geq N$. On the other hand if $N> K_\varepsilon$, then $\dim(\hi_E) \geq K_\varepsilon$ \cite{Hillery06}.

As an example, let us consider approximate programming of qubit unitaries $\sigma_i$, $i=0,\ldots,3$, where $\sigma_0 = \id_{\C^2}$. For all pairs the quantity $\inf_\varrho F_{1/2}\big(\sigma_j \varrho \sigma_j,\sigma_k \varrho \sigma_k\big)$ vanishes. Therefore, according to Eq.\,\eqref{eq:approbound} the processor in approximate programming of the above unitaries is at least four dimensional for $0\leq \varepsilon < \frac13 (3 - \sqrt 6)$, at least three dimensional for $\frac13 (3 - \sqrt 6) \leq \varepsilon < \frac12 (2 - \sqrt 2)$ and at least two dimensional for $\frac12 (2 - \sqrt 2) \leq \varepsilon< 1$. In Fig.\,\ref{fig:apprprog} these bounds have been compared to those previously solved in Ref.\,\cite{Hillery06}. It is noteworthy that our approach gives tighter limits for the amount of noise necessary to be present in order to approximately implement the four unitary channels $\sigma_i$, $i=0,\ldots,3$.

\subsection{Ruling out incompatible environmental properties} Let us consider a general thermal environment $\hi_E$ and two initial states of the environment in different temperatures $T_1$ and $T_2$. The environmental states therefore attain the Gibbs form $\xi(\beta_i) \doteq \expo{- \beta_i \Ho_{E}} / Z(\beta_i, \Ho_E)$, $i=1,2$, where $\beta_i = (k_B T_i)^{-1}$, where $\Ho_{E}$ is a Hamiltonian of the environment and $Z(\beta_i,\Ho_{E}) \doteq \tr{ \expo{ -\beta_i \Ho_{E}}}$ are the corresponding partition functions for $i=1,2$. 

	Our goal is to exclude some properties of environment inscribed in the spectrum of the Hamiltonian $\Ho_E$ that are incompatible with Ineq.\,\eqref{eq:mainineq}. Since for thermal states $\left[\xi(\beta_1),\xi(\beta_2)\right]=0$, we easily solve
\begin{widetext}
\begin{equation}
S_\alpha\big(\xi(\beta_1)||\xi(\beta_2)\big) = \frac{1}{\alpha -1} \big( \ln\left[ Z(\alpha \beta_1 + (1-\alpha) \beta_2,\Ho_{E}) \right]
- \alpha \ln\left[ Z( \beta_1,\Ho_{E}) \right] -  (1-\alpha) \ln\left[ Z( \beta_2,\Ho_{E} ) \right] \big). 
\end{equation}
In particular, for all $\alpha \in [\frac{1}{2}, 1)$ Ineq.\,(\ref{eq:mainineq}) implies the following inequality 
\begin{equation}
\ln\left[ Z(\alpha \beta_1 + (1-\alpha) \beta_2,\Ho_{E}) \right]- \alpha \ln\left[ Z( \beta_1,\Ho_{E}) \right]
-  (1-\alpha) \ln\left[ Z( \beta_2,\Ho_{E} ) \right]
\leq  \inf_{t\geq 0} \ln\left[\mc F_{\alpha} \big(\mc E_1^{(t)},\mc E_2^{(t)} \big)\right].
\end{equation}
In fact, due to the commutativity of the thermal states, we can expand this for $\alpha \in (0, 1)$ such that 
\begin{equation}\label{eq:mainineq3}
\ln\left[ Z(\alpha \beta_1 + (1-\alpha) \beta_2,\Ho_{E}) \right]- \alpha \ln\left[ Z( \beta_1,\Ho_{E}) \right]  
-(1-\alpha) \ln\left[ Z( \beta_2,\Ho_{E} ) \right]  \leq  \inf_{t\geq 0} \left\{ \begin{array}{cc}
\ln[\mc F_{\alpha} \big(\mc E_2^{(t)},\mc E_1^{(t)}\big) ],  \text{for} \  \alpha \in (0, \frac{1}{2}) \\
\ln[\mc F_{\alpha} \big(\mc E_1^{(t)},\mc E_2^{(t)}\big)],  \text{for} \ \alpha \in [\frac{1}{2}, 1)
\end{array} \right. .
\end{equation}
\end{widetext}

Let us consider the implications of the above inequalities. Assume that it is possible to prepare the environment in two different known temperatures and
perform the full-process tomography of the two induced dynamics.
This enables us to determine the values of $\mc F_{\alpha}$ in the right-hand side of Ineq.\,(\ref{eq:mainineq3}). On the other hand, any hypothesis about the Hamiltonian of the environment determines the left-hand side of the inequality. If the inequality is not satisfied, the hypothesized Hamiltonian of the system can be ruled out as incompatible with the induced pair of dynamics. We highlight that in order to do so neither the system-environment coupling nor the Hamiltonian of the system needs to be specified.

As a demonstration of the power of this method, let us consider a specific example of a qubit system coupled to an environment consisting of a single harmonic oscillator: such a situation could occur, for instance, when a two-level atom is passing through an optical cavity including only a single quantized mode. Assume that the mechanism of interaction is unknown as well as the oscillator frequency which we nevertheless wish to determine. Mathematically, we only know that the Hamiltonian of the environment is of the form $\Ho_E=\omega \big(b^{\dagger} b + \frac{1}{2}\id\big)$. Choosing different values for $\omega$ one plots the dashed coloured curves in Fig.\,\ref{fig:hamiltonians} given by the left-hand side of Ineq.\,\eqref{eq:mainineq3}. For a concrete example of the (exactly solvable) reduced dynamics giving the right-hand-side of Ineq.\,(\ref{eq:mainineq3}) and the solid black curve in Fig.\,\ref{fig:hamiltonians}, we consider the interaction model of the form
\begin{eqnarray}
 \Ho	&=& \Ho_S + \Ho_E + \Ho_I\nonumber \\
	&=& \frac{\omega_0}{2}\sigma_3 + \omega \big(b^{\dagger} b + \frac{1}{2}\id\big) + \sigma_3\otimes \big(g b^\dagger + g^* b\big)\,
\end{eqnarray}
analysed in \cite{Breuer}. One can see, that the inequality forbids frequency values greater than 3.1 times the real frequency $\omega$.
For details we refer the reader to Appendix \ref{app:b}. 
\begin{figure}[t!]
\includegraphics[width=0.45\textwidth]{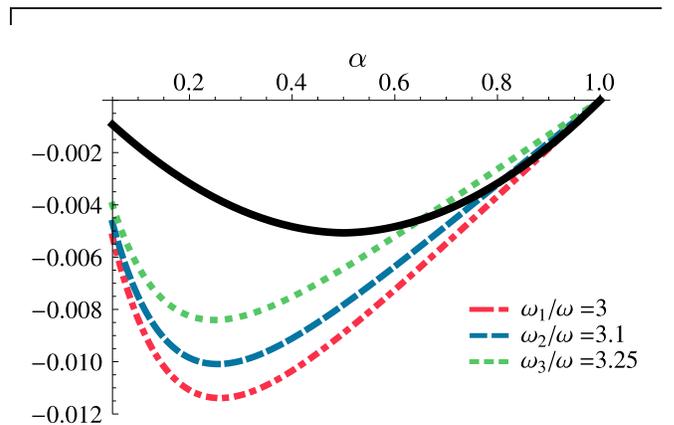}
\caption{ {\bf Excluding the frequency values of a harmonic oscillator.} (In this example we have fixed $\hbar=1=k_B$.) We prepare the environment in two temperatures and measure the induced dynamics, which gives us right-hand-side of Ineq.\,(\ref{eq:mainineq3}) (black line). The three dashed coloured lines correspond to the left-hand-side of Ineq.\,\eqref{eq:mainineq3} with different choices for $\omega_k$, $k=1,2,3$, in units of the actual frequency $\omega$. Knowing the two temperatures of the environment ($T_1/\omega=0.25$) and ($T_2/\omega =0.75$) we notice that some frequencies ($\omega_1/\omega=3$) are compatible with Ineq.\,(\ref{eq:mainineq3}) while others ($\omega_2/\omega=3.1$ and $\omega_3/\omega=3.25$) violate it. 
In fact, we see that Ineq.\,(\ref{eq:mainineq3}) is violated for all frequencies larger than $\omega_2/\omega=3.1$, and thus $\omega_2/\omega=3.1$ is the crossover between the compatible and incompatible frequencies.
As the probe we use a qubit and the induced dynamics result from an exactly solvable model described in \cite{Breuer}. The optimal initial state(s) of the qubit are $\varrho = |+\rangle \langle +|$ with $|+\rangle = \frac{1}{\sqrt{2}}(|0\rangle + |1\rangle)$ being the eigenstate of Pauli operator $\sigma_1$; for details see Appendix B.}\label{fig:hamiltonians}
\end{figure}

\subsection{Quantum thermometry} Let us consider the situation of the previous example from a different perspective. Suppose now that a thermal environment described by a single harmonic oscillator with Hamiltonian $\Ho_E=\omega \big(b^{\dagger} b + \frac{1}{2}\id\big)$ is initially prepared at calibration temperature $T_0 = 0$, that is $\xi(\beta_0=\infty) = |0 \rangle \langle 0|$. Let us assume we are tasked with probing another temperature $T$ of the environment. The $\alpha$-R\'{e}nyi divergence between the states $\xi(\beta_0)$ and $\xi(\beta=1/(k_B T) )$ now reads $S_\alpha \big(\xi(\beta_0)|| \xi(\beta)\big) = \frac{1-\alpha}{\alpha - 1} \big(-\frac{\beta \hbar \omega}{2} - \ln[ Z(\beta, \Ho_{E}) ] \big) $ and therefore
after some simple algebra we solve from Ineq.\eqref{eq:mainineq1} a lower bound for the temperature
\begin{align}\nonumber
&\frac{k_B T}{\hbar \omega}  \geq \\ \label{eq:thermoineq1}
&-1 \times  
\left\{ \begin{array}{ll}
\left[\ln\left[1-\inf_{t\geq 0} \mc F_{\alpha} \big(\mc E_T^{(t)}, \mc E_0^{(t)} \big)^{\frac{1}{\alpha}}\right]\right]^{-1}, &  \alpha \in (0, \frac{1}{2}) \\
\left[\ln\left[1- \inf_{t\geq 0} \mc F_{\alpha} \big(\mc E_0^{(t)} ,\mc E_T^{(t)} \big)^{\frac{1}{1-\alpha}}\right]\right]^{-1}, &   \alpha \in [\frac{1}{2}, 1) 
\end{array} \right. 
\end{align}
for all $T\geq 0$. It can be confirmed, that with temperatures lower than a limiting temperature $(k_B T)/ (\hbar \omega) \lesssim 1.03$ the term $-\ln[ Z(\beta, \Ho_{E}) ] = \frac{\beta \hbar \omega}{2} + \ln[1-e^{-\beta \hbar \omega}]$ is non-negative, from which using Ineq.\,\eqref{eq:mainineq1} we may also obtain an upper bound for the temperature
\begin{align} \nonumber
&\frac{k_B T}{\hbar \omega} \leq   \\ \label{eq:thermoineq2}
& -1/2 \times 
\left\{ \begin{array}{ll}
\left[\ln\left[\inf_{t\geq 0} \mc F_{\alpha} \big(\mc E_T^{(t)}, \mc E_0^{(t)} \big)^{\frac{1}{\alpha}}\right]\right]^{-1}, & \alpha \in (0, \frac{1}{2}) \\
\left[\ln\left[\inf_{t\geq 0} \mc F_{\alpha} \big(\mc E_0^{(t)} ,\mc E_T^{(t)} \big)^{\frac{1}{1-\alpha}}\right]\right]^{-1}, &  \alpha \in [\frac{1}{2}, 1)
\end{array} \right. \qquad
\end{align}

Since there is no dependency on the parameter $\alpha$ on the left-hand-side of the above inequalities, we optimize over this parameter. The quantities on the right-hand-side of the inequalities above can be solved (numerically) after measuring the dynamics; these observations allow us to estimate the unknown temperature $T$.

The inequalities (\ref{eq:thermoineq1}) and (\ref{eq:thermoineq2}) may, unfortunately, be difficult to evaluate in general due to double optimization with respect to states and therefore a more accessible form is desirable. To achieve this, we use the fact that $\mc F_{\alpha} \big(\mc E_1, \mc E_2\big) \leq \inf_\varrho F_{\alpha}\big(\mc E_1(\varrho), \mc E_2(\varrho)\big)$, for all $\alpha\in(0,1)$. Since $S_{\alpha}(\cdot || \cdot)$ is monotonically increasing in $\alpha$ \cite{Muller2013}, that is, $S_{\alpha}\big(\varrho_1 || \varrho_2\big)\leq S_{\alpha'}\big(\varrho_1 || \varrho_2\big)$ whenever $\alpha \leq \alpha'$, and since the logarithm function is monotonically increasing, we conclude that $F_{\alpha} ( \varrho_1, \varrho_2)^{\frac{1}{1-\alpha}}$ monotonically decreases in $\alpha$. On the other hand, since for all positive semidefinite linear operators $A,B$ and numbers $q\geq 0, r \geq 1$ the Araki-Lieb-Thirring inequality \cite{Lieb1976, Araki1990} states that $\tr{\big(B^{1/2} A B^{1/2}\big)^{rq}} \leq \tr{\big(B^{r/2} A^r B^{r/2}\big)^q}$ holds and since $F_{\alpha}(\varrho_1,\varrho_2)$ may be recast in the form \footnote{This follows from the fact that the operators $X X^\dagger$ and $X^\dagger X$ have the same non-zero eigenvalues, and therefore $\tr{(X X^\dagger)^\alpha} =\tr{(X^\dagger X)^\alpha}$.} $F_{\alpha}(\varrho_1,\varrho_2) = \tr{\big( \varrho_1^{1/2} \varrho_2^{(1-\alpha)/\alpha} \varrho_1^{1/2} \big)^{\alpha}}$, we have 
\begin{eqnarray}\nonumber
F_{\alpha}(\varrho_1,\varrho_2)&=&\tr{\big( \varrho_1^{1/2} \varrho_2^{(1-\alpha)/\alpha} \varrho_1^{1/2} \big)^{\frac{\alpha}{1-\alpha}(1-\alpha)}} \\ \nonumber
&\leq& \tr{\big( \varrho_1^{\alpha/2(1-\alpha)} \varrho_2 \varrho_1^{\alpha/2(1-\alpha)} \big)^{1-\alpha}}  \\
&=& F_{1-\alpha}(\varrho_2,\varrho_1)
\end{eqnarray} and therefore $F_{\alpha}(\varrho_1,\varrho_2)^{1/(1-\alpha)} \leq F_{1-\alpha}(\varrho_2,\varrho_1)^{1/(1-\alpha)}$ whenever $\alpha\geq 1/2$ ($\Leftrightarrow r=\frac{\alpha}{1-\alpha} \geq 1$). To summarize
 \begin{eqnarray} \label{eq:thermoineq3}
\frac{k_B T}{\hbar \omega}  &\geq & - \ln\left[1-\inf_{\varrho,t\geq 0}\lim_{\alpha \nearrow 1}   F_{\alpha} \big(\mc E_0^{(t)}(\varrho), \mc E_T^{(t)}(\varrho) \big)^{\frac{1}{1-\alpha}}\right]^{-1} \nonumber \\
&=& - \ln\left[1- e^{-\sup_{\varrho,t\geq 0} S_1\big(\mc E_0^{(t)}(\varrho)||\mc E_T^{(t)}(\varrho)\big)}\right]^{-1}\,,
 \end{eqnarray} 
for $T\geq 0$, where $\lim_{\alpha \nearrow 1} S_\alpha (\varrho_1||\varrho_2 \big) = S_1\big(\varrho_1||\varrho_2 \big) \doteq \tr{ \varrho_1\big(\ln\varrho_1 - \ln\varrho_2 \big)}$ is the Kullback-Leibler divergence \cite{Muller2013}. Similarly, for the upper bound of temperature one derives
\begin{eqnarray} \label{eq:thermoineq4}
\frac{k_B T}{\hbar \omega}  &\leq & -1/2  \ln\left[\inf_{\varrho,t\geq 0} \lim_{\alpha \nearrow 1}  F_{\alpha} \big(\mc E_0^{(t)}(\varrho), \mc E_T^{(t)}(\varrho) \big)^{\frac{1}{1-\alpha}}\right]^{-1} \nonumber \\
&=&\left( 2 \sup_{\varrho,t\geq 0} S_1\big(\mc E_0^{(t)}(\varrho)||\mc E_T^{(t)}(\varrho)\big) \right)^{-1}\,,
\end{eqnarray}
when $0 \leq (k_B T)/ (\hbar \omega) \lesssim 1.03.$
\begin{figure}[t!]
\includegraphics[width=0.45\textwidth]{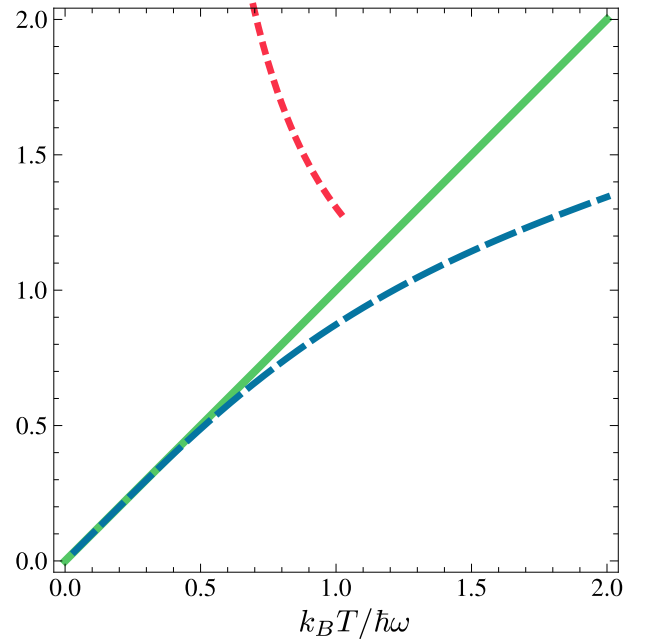}
\caption{ {\bf Estimation of temperature.} We assume that the environment is prepared in two different temperatures ($(k_B T_0)/(\hbar \omega) = 0$ and $(k_B T)/(\hbar \omega)$) and measure the induced dynamics. The green line represents the actual temperature $(k_B T)/(\hbar \omega)$. From Ineq.\,\eqref{eq:thermoineq3} we solve a lower bound of $(k_B T)/(\hbar \omega)$, plotted above as the blue dashed curve. The red dotted curve corresponds to the upper bound of the temperature valid whenever $(k_B T)/(\hbar \omega) \lesssim 1.03$ given by Ineq.\,\eqref{eq:thermoineq4}. The system that we use for probing the temperature is a qubit and the dynamics are due to Jaynes-Cummings model in the resonant case (see Ref.\,\cite{Smirne2010}). The initial state of the qubit that we have used for in the plot is $\varrho=|+\rangle \langle + |$, since numerics suggest that it leads to the optimal bounds. } \label{fig:thermo}
\end{figure}

In Fig.\,\ref{fig:thermo} we have plotted $(k_B T)/(\hbar \omega)$ and the optimal bounds given by the right-hand-sides of Ineqs.\,(\ref{eq:thermoineq3}) and (\ref{eq:thermoineq4}) for the resonant Jaynes-Cummings model, which describes a two-level atom interacting with a single quantized mode of radiation in an optical cavity \cite{Breuer, Smirne2010}; for details we refer the reader to Appendix C. In the figure, we have chosen the initial state of the qubit probe as $\varrho=|+\rangle \langle + |$, since numerics suggest that it leads to the optimal bounds. 

A standard procedure in (quantum) thermometry is to bring the thermometer into contact with a bath, let it thermalize and then read out the temperature. In the quantum scenario, we say that a qubit probe has {\it thermalized} with the bath if any initial state $\varrho$ of the probe evolves in the long time limit $t\to \infty$ into the equilibrium Gibbs state
$\varrho \to \varrho_{\text{eq}} \doteq  \big(1-p(T) \big) \, \ket 0 \bra 0 + p(T) \, \ket 1 \bra 1$, where $p(T) = \big[ 1 +  \text{exp}(\hbar \omega/k_B T ) \big]^{-1}$. 
Quantum thermometry protocols  \cite{Brunelli2012,Higgins2013,Jevtic2015,Johnson2016} and optimality of initial probe states \cite{Correa2015} have been studied for both fully and partially thermalizing probes. To our best knowledge all pre-existing protocols are based on knowing the exact form of the coupling. Also the optimal initial probe states have been solved in the partially thermalized probes only for fixed system-probe couplings.

In the case of fully thermalized probes, calculating the $\alpha$-fidelity becomes particularly easy: $\lim_{t\to \infty} \mc F_{\alpha} \big(\mc E_0^{(t)}, \mc E_T^{(t)} \big) = \big(1-p(T) \big)^{1-\alpha}$. Accordingly, our protocol sets the limits 
\begin{eqnarray}
- \big[ \ln[ p(T) ] \big]^{-1} \leq \frac{k_B T}{\hbar \omega} \leq - 1/2 \big[ \ln[ 1- p(T) ] \big]^{-1},
\end{eqnarray}
with the upper bound valid whenever $0 \leq (k_B T)/ (\hbar \omega) \lesssim 1.03.$ It is noteworthy, that due to the special form of the state $\varrho_{\text{eq}}$ the temperature $T$ could be solved analytically from the state occupations, which shows that our protocol is the not the optimal one for thermometry under the assumption of thermalization. However, there is no {\it a priori} need to require thermalization of the probe with the bath in our protocol. This is an important advantage, since thermalization itself in a fully quantum scenario may not be a well-defined concept due to the ever present quantum fluctuations and the long time-scales needed compared to the survival time of the quantum properties of the bath.

We wish to emphasize that knowing the specific system environment coupling is not required in the above two parameter estimation protocols {\bf B} and {\bf C}: to the best of our knowledge, such probing approaches independent of the description of the coupling have never been proposed before. While the precisions of the frequency estimation and thermometer are clearly not optimal, these examples serve as proofs-of-principle demonstrations of the ability to extract information on certain physical quantities by means of our approach, making only minimal assumptions on the microscopic details of the system to be probed. 
From a broader point of view, the two protocols {\bf B} and {\bf C} could be applied to estimate, not only the frequency and temperature, but also other physical quantities of a system of interest. The task to probe some of such quantities could be unattainable ''conventionally'', since the description of the coupling model between the system and the probe is not either accurately known or is borderline totally missing. An ultimate example of such case is the quantum gravity, since no commonly agreed model of gravitational interaction in quantum regime exists.

\subsection{Bounds for the Loschmidt Echo}

In our last application we change the perspective altogether and consider a different implication of Ineq.\,\eqref{eq:mainineq2}. We show how from the $\alpha$-fidelity of the initial environmental states we can infer certain properties of the induced dynamics, such as information back-flow (non-Markovianity). Importantly, back-flow of information can be used to protect and restore vital quantum properties of the system subjected to detrimental noise \cite{Rivas2014, Breuer2016, Hinarejos2016}.

As an example we concentrate on one-qubit dephasing channels arising from system environment interaction. Although the method of this subsection holds generally for such channels, we demonstrate the power of the protocol by focusing on a specific situation with known solutions only for certain initial states of environment: a qubit transversely coupled to its environment, which is an Ising spin chain in a transverse field. 
The total evolution of such a composite system is governed by the Hamiltonian $\Ho(\lambda,\delta) = - J \,\Sigma_j\big( \sigma_3^{(j)} \otimes \sigma_3^{(j+1)} + \lambda \, \sigma_1^{(j)} + \delta \, |e\rangle\langle e| \otimes \sigma_1^{(j)} \big)$. Here the parameters $J$, $\lambda$ and $\delta$ characterize the strength of the nearest neighbour interactions in the Ising chain, its coupling to a transverse field and the coupling between system and environment, respectively \cite{Quan2006, Haikka2012}. It is known, that the initial environmental state $|\phi\rangle$ induces dynamics corresponding to a pure dephasing channel $
\mc E^{(t)}_\phi(\varrho) = p_\phi(t) \varrho + \big(1-p_\phi(t)\big) \sigma_3 \varrho \sigma_3$ with the probability $p_\phi(t)$ given by
\begin{eqnarray}
p_\phi(t)=\frac{1}{2}\big(1+\sqrt{L_\phi(\lambda,t)}\big),
\end{eqnarray}
where $L_\phi(\lambda,t) = |\langle \phi | e^{i \Ho(\lambda,0) t} e^{-i \Ho(\lambda+\delta,0) t} \phi \rangle|^2$ is the Loschmidt echo corresponding to the state $|\phi\rangle$. Despite the apparent simplicity, finding an analytical expression of $L_\phi(\lambda,t)$ is a difficult task for a general initial state $|\phi\rangle$. In fact to our best knowledge, the only known analytical solution exists for the ground state $|\phi_0\rangle$ of the Hamiltonian $\Ho(\lambda,0)$ for which one obtains
\begin{eqnarray}
L_{\phi_0}(\lambda,t)= \Pi_{k>0} \big(1-\sin^2(2 \alpha_k) \sin^2(\varepsilon_k t) \big),
\end{eqnarray}
where $\alpha_k$ are the Bogoliubov angles and $\varepsilon_k$ the single quasiparticle excitation energies of the system with the qubit in the excited state $|e\rangle$ \cite{Quan2006, Haikka2012}.
\begin{figure*}[t!]
\begin{minipage}[t]{0.9\textwidth}
\includegraphics[width=0.9\textwidth]{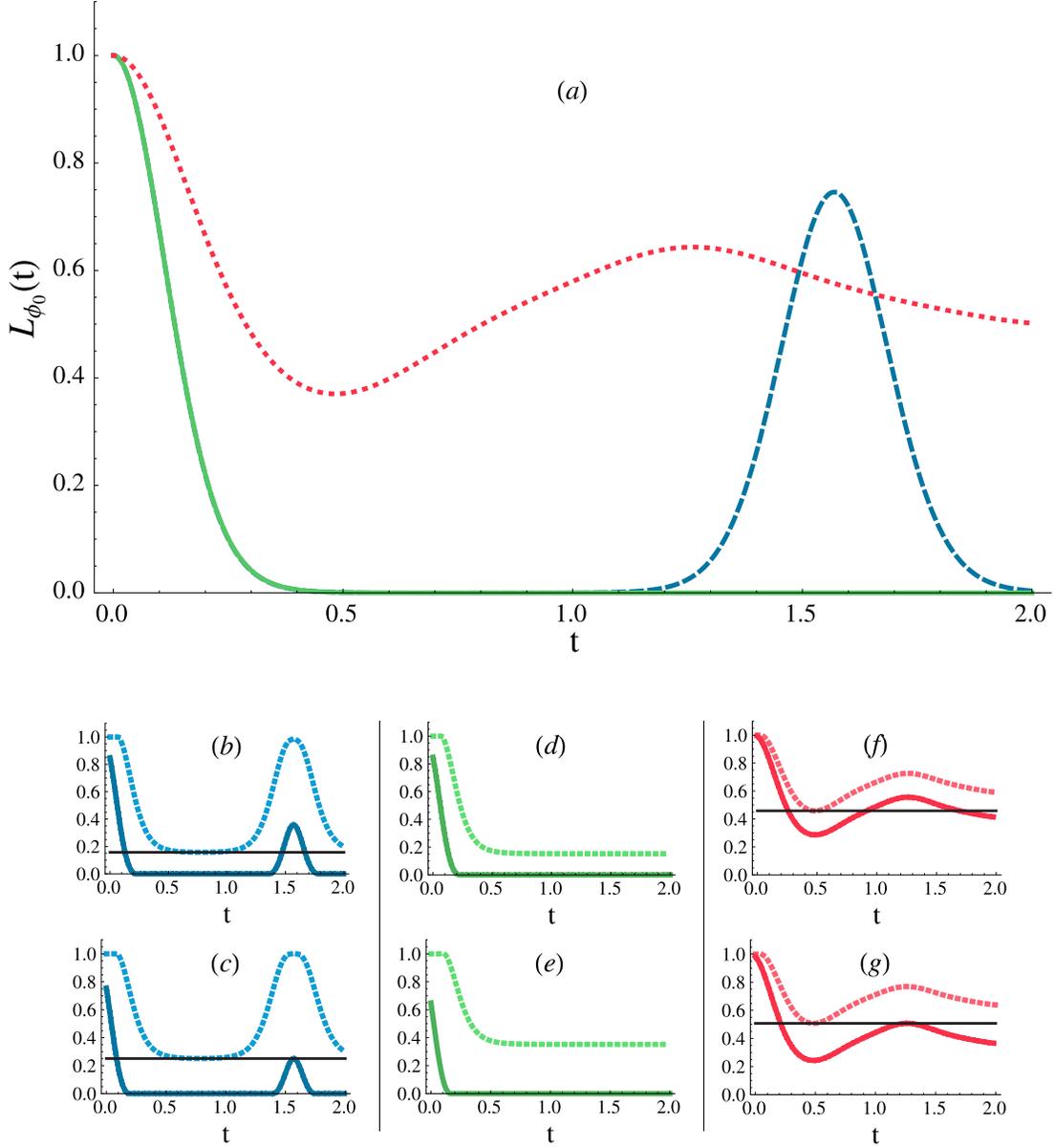}
\end{minipage}
\caption{{\bf Bounding the Loschmidt echo.} (a) The dynamics of the Loschmidt echo induced by the environmental ground state $\phi_0$. The parameter values of the Hamiltonian $\Ho(\lambda)$ are set as $J = 1,\, \delta = 0.1$ and $N = 4000$. The blue dashed line corresponds to the choice $\lambda = 0.01$, the green solid line corresponds to the critical value $\lambda = 0.9$ and the red dotted line corresponds to the value $\lambda = 1.8$ (b) - (g) The dynamics of the upper and lower bound of the unknown Loschmidt echo are represented by the curves in lighter and darker hue of the corresponding color choices of the parameter $\lambda$, respectively. The black solid line indicates the level of the first local minimum of the upper bound. In (b) and (f) the lower bounds cross the black line after the local minimum of the upper bound, which implies revivals in the Loschmidt echo and thus information back-flow.
(b) $\lambda = 0.01$, $F_{1/2}(\phi_0, \phi) = 0.98$.
(c) $\lambda = 0.01$, $F_{1/2}(\phi_0, \phi) = 0.966675$.
(d) $\lambda = 0.9$, $F_{1/2}(\phi_0, \phi) = 0.98$.
(e) $\lambda = 0.9$, $F_{1/2}(\phi_0, \phi) = 0.95$.
(f) $\lambda = 1.8$, $F_{1/2}(\phi_0, \phi) = 0.999$.
(g) $\lambda = 1.8$, $F_{1/2}(\phi_0, \phi) = 0.99761$.
} \label{fig:loschmidt}
\end{figure*}

Using Ineq.\,\eqref{eq:mainineq2} allows us to give an estimate for the Loschmidt echo $L_\phi(\lambda,t)$ that is valid for any state $|\phi\rangle$. Indeed, fixing an initial state $\varrho = |+\rangle\langle +|$ of the qubit, $|+\rangle = \frac{1}{\sqrt{2}}( |0\rangle + |1\rangle)$ being the eigenstate of $\sigma_1$, we can solve 
\begin{eqnarray}\label{eq:lossi}
\mc E^{(t)}_\phi(\varrho) &= &\frac{1}{2} \left( 
\begin{array}{cc} 1 & \sqrt{L_\phi(\lambda,t)} \\
\sqrt{L_\phi(\lambda,t)} & 1
\end{array}
\right).
\end{eqnarray}
It can be concluded that 
\begin{align} \nonumber
F_{1/2}\big( \phi_0, \phi\big)&\leq  F_{1/2}\big(\mc E^{(t)}_{\phi_0}(\varrho), \mc E^{(t)}_\phi(\varrho)\big)\nonumber\\ \nonumber
&= \frac{1}{2} \sqrt{\big(1-\sqrt{L_{\phi_0}(\lambda,t)}\big)\big(1-\sqrt{L_\phi(\lambda,t)}\big)}\nonumber\\ \label{eq:loschmidt}
&+\frac{1}{2} \sqrt{\big(1+\sqrt{L_{\phi_0}(\lambda,t)}\big)\big(1+\sqrt{L_\phi(\lambda,t)}\big)},
\end{align}
from which an estimate for $L_\phi(\lambda,t)$ in terms of $F_{1/2}\big( \phi_0, \phi\big)$ may be solved. In Fig.\,\ref{fig:loschmidt} we have presented the results for three different values of the strength of the transverse field $\lambda$ fixing the parameters $\delta = 0.1$, $J = 1$ and the number of spins $N = 4000$. In particular, we see that for those $\lambda$ resulting to 
revivals in Loschmidt echo
in the ground state $|\phi_0\rangle$, after a certain limiting value of $F_{\text{lim}}\leq F_{1/2}\big( \phi_0, \phi\big)$  such revivals are guaranteed 
for the state $|\phi\rangle$ also. The revivals of Loschmidt echo are linked to revivals of coherences via Eq.\,\eqref{eq:lossi} and hence indicate back-flow of information from environment to system.

It should be pointed out, that the upper and lower bounds of the Loschmidt echo $L_\phi$ could be made tighter by considering general $\alpha$-fidelities in the above Ineq.\,\eqref{eq:loschmidt}. However, since solving $L_\phi$ from Ineq.\,\eqref{eq:loschmidt} in such a general situation can only be done numerically, we have left this examination as a topic for future investigation.

\section{Conclusions and Outlook} \label{sec:conc}

In this paper we have introduced a novel family of $\alpha$-fidelities of quantum channels. We also have derived an inequality between the $\alpha$-fidelity of the initial states of the environment and the $\alpha$-fidelity of the corresponding induced dynamics: $F_\alpha(\xi_1,\xi_2) \leq \mc F_\alpha\big(\mc E_1^{(t)}, \mc E_2^{(t)}\big)$, $\alpha\in[1/2,1)$. From a more practical viewpoint, the inequality was then considered in the context of four different applications: {\bf A)} quantum programming, {\bf B)} discrimination of environmental properties, {\bf C)} quantum thermometry and in {\bf D)} deriving bounds for the Loschmidt echo of a general (pure) initial state of the spin-chain.

In a nutshell, the general procedure of estimating an unknown $\xi_2$
requires 1) the possibility to prepare the environment in a known ''calibration'' state $\xi_1$ and 2) the possibility to
compare the probe dynamic $\mc E_1^{(t)}$ and  $\mc E_2^{(t)}$ induced by these states. To apply our framework to quantum probing experimentally, as proposed in examples {\bf B)}, {\bf C)} and {\bf D)}, the experimenter needs first to perform quantum process tomography and then numerical optimization for all times, which may be extremely difficult. However, the starting point underlying behind our results is the Ineq.\,\eqref{eq:mainineq2}, which shows that non-trivial bounds for $F_{\alpha}(\xi_1, \xi_2)$ may be found
for plethora of choices of initial probe states $\varrho_1$ and $\varrho_2$; the notion of $\alpha$-fidelity of dynamics was developed as the best achievable resolution of this inequality. In particular, choosing $\varrho_i= \varrho,$ $i=1,2$, simplifies the Ineq.\,\eqref{eq:mainineq2} to
\begin{eqnarray}
F_{\alpha}(\xi_1, \xi_2) \leq F_{\alpha}\big(\mc E_1^{(t)} (\varrho) , \mc E_2^{(t)} (\varrho) \big) \leq 1 \, ,
\end{eqnarray}
for all $\varrho \in \mc S(\hi_S)$ and $\alpha \in [1/2, 1)$ and any given time $t\in [0,\infty)$. Although the resulting upper bound $F_{\alpha}\big(\mc E_1 (\varrho) , \mc E_2 (\varrho) \big)$ may not be the optimal one, any choice of the initial state of $\varrho \in \mc S(\hi_S)$ allows one to extract some information of system of interest via the protocols described above. In addition, the full process tomography is now reduced from the full channel tomography to mere state tomography of the evolved states of the probe $\mc E_1 (\varrho)$ and $\mc E_2 (\varrho)$.


In our theoretical framework the introduced fidelities are based on the definition of the $\alpha$-R\'enyi divergence, given in Eq.\,\eqref{eq:defrenyi}, mainly because of its connection to the Uhlmann fidelity for $\alpha = 1/2$. However as we mentioned before, there exists another quantum divergence often encountered in the literature, namely $\widetilde{S}_{\alpha}$ defined in Eq.\eqref{eq:defrenyi0}. It is known that also $\widetilde{S}_\alpha$ satisfies both the properties (S1)-(S3) and the data processing inequality (S4) for all $\alpha \in (0,1)$ \cite{Muller2013}. Hence, also for $\widetilde{F}_\alpha(\varrho_1, \varrho_2 ) \doteq \tr{\varrho_1^\alpha \varrho_2^{1-\alpha}}$, we have $\widetilde{F}_\alpha(\varrho_1,\varrho_2) \widetilde{F}_\alpha(\xi_1,\xi_2) \leq \widetilde{F}_\alpha\big(\mc E_1^{(t)}(\varrho_1), \mc E_2^{(t)}(\varrho_2)\big)$, $\alpha\in(0,1)$. Furthermore, for commuting states $[\varrho_1, \varrho_2]=0$ the quantities $S_\alpha$ and $\widetilde{S}_\alpha$ coincide \cite{Muller2013}.  Although it is known that $\widetilde{F}_\alpha \leq F_\alpha$, since $ S_\alpha \leq \widetilde{S}_\alpha $ \cite{Wilde2014, Datta2014}, an interesting question to investigate in the future is which of the two definitions $\widetilde{F}_\alpha$ or $F_\alpha$ leads to a tighter inequality between the inducing states and induced dynamics -- or whether such optimality even holds in general. It is noteworthy, that the results shown in Figs.\,\ref{fig:hamiltonians}, \ref{fig:thermo} and \ref{fig:loschmidt} are independent of the choice of  $\widetilde{F}_\alpha$ or $F_\alpha$ due to the facts that the time-evolved final states of the probe are commuting in all the three cases as well as the environment states in applications {\bf B)}, {\bf C)}, and since $\widetilde{F}_\alpha$ and $F_\alpha$ coincide when comparing pure states in {\bf D)}.

Due to the generality of our formalism, our results can be used in different fields of physics: from quantum information theory to solid state physics, from particle physics to cosmology, from quantum gravity to atomic and molecular physics, and from quantum optics to quantum thermodynamics. Therefore our approach has the potential to pave the way to new fundamental theoretical and experimental discoveries in numerous physical scenarios.

\section*{Acknowledgements} 
The authors are grateful to M\'ario Ziman for his comments on the manuscript. 
S.M was supported by the EU Collaborative project QuProCS (Grant Agreement 641277) and the Academy of Finland (Project no. 287750).
M.T. and H.L. acknowledge financial support from the University of Turku
Graduate School (UTUGS) and M.T. additionally acknowledges the hospitality of Mathematical Physics Group in Toru\'n.
G.S. was partially supported by the National Science Centre (project 2015/17/B/ST2/02026), the Magnus Ehrnrooth Foundation and wishes to acknowledge the hospitality of Turku Quantum Technology Group.


\appendix

\section*{APPENDIX A: Evaluating $\alpha$-fidelity of qubit channels}
In this section we present the method that has been used to evaluate the $\alpha$-fidelity of one-qubit channels. First we simplify the analytic expression to make the numeric optimization more feasible. The density matrices of two arbitrary one-qubit states $\varrho_k$, $k=1,2$, can be written in the Bloch form as
\begin{eqnarray}
 \varrho_k &=& \frac 12 \left(\id + x_k \sigma_1 + y_k \sigma_2 + z_k \sigma_3 \right) \nonumber \\
 &=& \frac 12 \left( \begin{array}{cc} 1+z_k & x_k-iy_k \\ x_k+iy_k & 1-z_k \end{array} \right)\,,
\end{eqnarray}
where $x_k,\,y_k$ and $z_k$ are real numbers satisfying $x_k^2 + y_k^2 + z_k^2\leq 1$ and $\sigma_1,\,\sigma_2$ and $\sigma_3$ are the Pauli operators. The density matrices can be diagonalized as
\begin{eqnarray}\label{diag}
&& \varrho_k = \frac 1{2r_k(r_k+z_k)} \left( \begin{array}{cc} r_k+z_k & -(x_k-iy_k) \\ x_k+iy_k & r_k+z_k \end{array} \right) \times \nonumber \\ 
&& \left( \begin{array}{cc} \frac{1+r_k}2 & 0 \\ 0 & \frac{1-r_k}2 \end{array} \right)
 \left( \begin{array}{cc} r_k+z_k & -(x_k-iy_k) \\ x_k+iy_k & r_k+z_k \end{array} \right)^\dagger,
\end{eqnarray}
where $r_k=\sqrt{x_k^2+y_k^2+z_k^2}$. Now by using Eq.~\eqref{diag} we can simplify the form of $\varrho_k^{\frac{1-\alpha}{\alpha}}$ as
\begin{equation}\label{expo}
 \varrho_k^{\frac{1-\alpha}\alpha} = \frac 1{2r_k} \left( \begin{array}{cc} r_kA_k^++z_kA_k^- & (x_k-iy_k)A_k^- \\ (x_k+iy_k) A^- & r_kA_k^+-z_kA_k^- \end{array} \right),
\end{equation}
 where $A_k^{\pm}=\left(\frac{1+r_k}2\right)^{\frac{1-\alpha}\alpha}\pm\left(\frac{1-r_k}2\right)^{\frac{1-\alpha}\alpha}$. 
  By using Eq.~\eqref{expo} and the properties of trace and determinant, we can calculate the trace and determinant of the matrix product $\varrho_2^{\frac{1-\alpha}{2\alpha}}\varrho_1\varrho_2^{\frac{1-\alpha}{2\alpha}}$ as
 \begin{eqnarray}\label{trace}
  T &:=& \text{tr}\left[\varrho_2^{\frac{1-\alpha}{2\alpha}}\varrho_1\varrho_2^{\frac{1-\alpha}{2\alpha}}\right] = \frac 12 A^+_2 + \frac 12 A^-_2 \frac{r_1 r_2}{r_2}\,, \\ \label{det}
  D &:=& \det\left(\varrho_2^{\frac{1-\alpha}{2\alpha}}\varrho_1\varrho_2^{\frac{1-\alpha}{2\alpha}}\right) = \left( \frac{1-r_2^2}4 \right)^{\frac{1-\alpha}{\alpha}} \left( \frac{1-r_1^2}4 \right)\,. \qquad
 \end{eqnarray}
From Eqs.~\eqref{trace} and \eqref{det} we can write the eigenvalues $\lambda_+$ and $\lambda_-$ of $\varrho_2^{\frac{1-\alpha}{2\alpha}}\varrho_1\varrho_2^{\frac{1-\alpha}{2\alpha}}$ as
 \begin{equation}
  \lambda_\pm=\frac 12 \left( T \pm \sqrt{T^2 - 4D} \right).
 \end{equation}
Using these eigenvalues, we can diagonalize $\varrho_2^{\frac{1-\alpha}{2\alpha}}\varrho_1\varrho_2^{\frac{1-\alpha}{2\alpha}}$ as 
\begin{equation}
\varrho_2^{\frac{1-\alpha}{2\alpha}}\varrho_1\varrho_2^{\frac{1-\alpha}{2\alpha}} = URU^{\dagger}\,,
\end{equation}
where $R =$ diag($\lambda_+, \lambda_-$) and $U$ is the diagonalizing unitary matrix found in the same way as in Eq.~\eqref{diag}. Finally, this allows us to express the $\alpha$-fidelity of two arbitrary one-qubit states, $\varrho_1$ and $\varrho_2$, as
 \begin{equation}
  F_\alpha(\varrho_1,\varrho_2) = \text{tr}\left[\left( URU^{\dagger} \right)^{\alpha}\right] = \lambda_+^\alpha + \lambda_-^\alpha\,.
 \end{equation}
Now we see that, in order to calculate the channel $\alpha$-fidelity of one-qubit channels $\mc E_1$ and $\mc E_2$, it is enough to minimize the quotient
 \begin{equation}\label{fraction}
\frac{F_{\alpha} \big( \mc E_1(\varrho_1), \mc E_2(\varrho_2)\big) }{F_{\alpha} (\varrho_1,\varrho_2)} =  \frac{\widetilde{\lambda}^\alpha_+ + \widetilde{\lambda}_-^\alpha}{\lambda_+^\alpha + \lambda_-^\alpha}\,.  
 \end{equation}
 Here $\widetilde{\lambda}_\pm$ are eigenvalues of $\mc E_2(\varrho_2)^{\frac{1-\alpha}{2\alpha}}\mc E_1(\varrho_1)\mc E_2(\varrho_2)^{\frac{1-\alpha}{2\alpha}}$ calculated in a similar way from the Bloch coordinates of $\mc E_1(\varrho_1)$ and $\mc E_2(\varrho_2)$, and $\lambda_\pm$ are the above solved eigenvalues of $\varrho_2^{\frac{1-\alpha}{2\alpha}}\varrho_1\varrho_2^{\frac{1-\alpha}{2\alpha}}$, where $\varrho_1$ and $\varrho_2$ are the initial states of the qubits. The quantity to be minimized in Eq.~\eqref{fraction} is a function of six variables $x_1,y_1,z_1,x_2,y_2,z_2$ under the constraints $x_1^2+y_1^2+z_1^2\le 1$ and $x_2^2+y_2^2+z_2^2\le 1$. 
  
 \subsection*{Example: the case of pure dephasing channels}
  
 In this example the reduced dynamics is a family of dephasing channels with the same invariance basis for all values of time $t$ and the inverse temperature $\beta=1/k_B T$. Let $\Gamma_1(t)$ and $\Gamma_2(t)$ be the decoherence rates at time $t$ (the same for both channels) corresponding to $\beta_1$ and $\beta_2$, respectively. The two dephasing channels $\mc E_i$ are defined via 
\begin{eqnarray}
&&\varrho=\frac 12 \left( \begin{array}{cc} 1+z & x-iy \\ x+iy & 1-z \end{array} \right) \nonumber \\ &\mapsto& \mc E_i^{(t)}(\varrho)= \frac 12 \left( \begin{array}{cc} 1+z & \Gamma_i(t) (x-iy) \\\Gamma_i(t)^* (x+iy) & 1-z \end{array} \right)\,.\qquad
\end{eqnarray}
  We have performed the numerical minimization of Eq.~\eqref{fraction} by tools of constrained minimisation. More precisely, we have used the method SLSQP (Sequential Least Square Programming) implemented in \texttt{scipy.optimize} library.
  The numerical calculation shows, that the minimum is attained when $\varrho_1=\varrho_2$ is any pure state maximally unbiased with respect to the invariant basis, for example $|+\rangle\langle + | = \frac 12 (\id + \sigma_1)$. This observation lets us write the explicit formula for $\alpha$-fidelity of the dynamics: 
\begin{eqnarray} \label{dec_fid}
 \mathcal{F}_\alpha(\mc E_1^{(t)},\mc E_2^{(t)}) &=& \left(\frac{1+\Gamma_2(t)}2\right)^{1-\alpha}  \left(\frac{1+\Gamma_1(t)}2\right)^\alpha \nonumber \\
 &+& \left(\frac{1-\Gamma_2(t)}2\right)^{1-\alpha} \left(\frac{1-\Gamma_1(t)}2\right)^\alpha.\qquad
\end{eqnarray}

\section*{APPENDIX B}\label{app:b}

In application {\bf B)} (Ruling out incompatible environmental properties) we consider a system coupled to a thermal environment with Hamiltonian $\Ho_E$ in two different inverse temperatures $\beta_i =1/T_i$, $i=1,2$; we have used natural units $\hbar=1=k_B$ throughout this section. We showed that the partition functions $Z(\beta_i, \Ho_E)$ of the two environmental states and the corresponding induced dynamics $\mc E_i^{(t)}$ satisfy the inequality Ineq.\,\eqref{eq:mainineq3}. It was reasoned that some of the properties of the environment inscribed in $\Ho_E$ may be ruled out by measuring the dynamics induced by two different temperatures and by testing which partition functions are (in)compatible with Ineq.\,\eqref{eq:mainineq3}. In a particular case of environment consisting of harmonic oscillators, that is $\Ho_E = \sum_k \omega_k \big(b_k^{\dagger} b_k + \frac{1}{2}\id\big)$, where $b^{\dagger}_k$ and $b_k$ are the creation and annihilation operators of the environmental mode $k$, respectively, the logarithms of the partition functions attain a simple form
\begin{eqnarray}\label{eq:decopartition}
\ln \left[ Z(\beta_i, \Ho_E) \right] = \sum_k \, \big( - \frac{\beta_i \omega_k}{2} -\ln\left[1-e^{-\beta_i \omega_k} \right] \big)\,, \qquad
\end{eqnarray}
$i=1,2,$ which then only depends on the oscillators frequencies $\omega_k$ and the known inverse temperatures $\beta_i$. 

Inequality \,\eqref{eq:mainineq3} holds regardless of the choice of the actual coupling between the system and the environment. As an example of our method, in Fig.\,\ref{fig:hamiltonians} we have, however, deployed a particular exactly solvable system-environment model leading to the dephasing channel of the system described in Ref.\,\cite{Breuer}. In this model the system is a qubit coupled to environment consisting of harmonic oscillators and the total Hamiltonian, consisting of the Hamiltonians of the system $\Ho_S$, the environment $\Ho_E$, and the interaction Hamiltonian $\Ho_I$, governing the composite evolution can be written as 
\begin{align}\nonumber
\Ho	&= \Ho_S + \Ho_E + \Ho_I\\
	&=\frac{\omega_0}{2}\sigma_3 + \sum_k \omega_k \big(b_k^{\dagger} b_k + \frac{1}{2}\id\big) + \sum_k \sigma_3\otimes \big(g_k b_k^{\dagger} + g_k^* b_k\big)\,,
\end{align}
$g_k$ describes the strength at which the system couples to different modes. Fixing the initial composite system state as $\varrho_{S + E} = \varrho \otimes \xi(\beta_i)$ we get an analytical solution for the qubit system dynamics
\begin{equation}\label{decohchannel}
\varrho = 
\begin{pmatrix}
\varrho_{00}	& \varrho_{01}\\
\varrho_{10}	& \varrho_{11}
\end{pmatrix}
\mapsto \mc E_i^{(t)} (\varrho) =
\begin{pmatrix}
\varrho_{00}	& \Gamma_i(t) \, \varrho_{01}\\
\Gamma_i(t)\, \varrho_{10}	& \varrho_{11}
\end{pmatrix}\,,
\end{equation}
where $\Gamma_i (t) = \text{exp}\big[-\sum_k \frac{4 |g_k|^2}{\omega_k^2}\coth\big(\frac{\omega_k}{2T_i}\big)\big[1-\cos (\omega_k t)\big]\big]$; we refer the reader to \cite{Breuer} for further details of this model. 

For simplicity we will only consider environment consisting of a single oscillator; this is also the case in Fig.\,\ref{fig:hamiltonians}. By fixing two temperatures of the environmental initial state, $T_1,\, T_2$, and a single oscillator frequency $\omega$ which we want to probe, we obtain two dephasing channels $\mc E_1^{(t)}$ and $\mc E_2^{(t)}$ of the form Eq.~\eqref{decohchannel}
and according to (\ref{dec_fid}) their $\alpha$-fidelity is equal to
\begin{eqnarray}
&& \mc F_{\alpha} (\mc E^{(t)}_1, \mc E^{(t)}_2)= \nonumber \\
&&\left(\frac{1+C_g(t)^{\coth(\frac 12 \omega \beta_2)}}2\right)^{1-\alpha} \left(\frac{1+C_g(t)^{\coth(\frac 12 \omega \beta_1)}}2\right)^\alpha \nonumber \\
&+& \left(\frac{1-C_g(t)^{\coth(\frac 12 \omega \beta_2)}}2\right)^{1-\alpha} \left(\frac{1-C_g(t)^{\coth(\frac 12 \omega \beta_1)}}2\right)^\alpha ,\nonumber \\
\end{eqnarray}
for $C_g(t)=\exp\big[-\frac{4 |g|^2}{\omega^2}\big[1-\cos{\omega t}\big]\big]$.
Importantly, the function $C_g(t)$ is the only term involving both the coupling strength $g$ and time $t$. It is easily verified that for any $|g|/\omega \geq 1/\sqrt 8$ the function $C_g(t)$ can be kept constant in value by choosing $t$ appropriately. Therefore, we can conclude that any coupling (in units of $\omega$) stronger than $1/\sqrt 8$ can only make the number $\inf_t\mc F_{\alpha} (\mc E^{(t)}_1, \mc E^{(t)}_2)$ smaller and consequently our protocol to work better. For demonstration we have chosen $|g|/\omega = 1$, $T_1/\omega=0.25$ and $T_2/\omega = 0.75$. We can therefore plot $\inf_t \ln\big[\mc F_{\alpha} (\mc E^{(t)}_1, \mc E^{(t)}_2)\big]$ as a function of $\alpha$, which is the r.h.s.~of  Ineq.\,\eqref{eq:mainineq3}. Also, by using the same temperatures $T_1,\, T_2$ and varying the frequency value in Eq.~\eqref{eq:decopartition} we can plot the l.h.s.~of Ineq.\,\eqref{eq:mainineq3} as a function of $\alpha$ and compare it with the plot of the r.h.s. Whenever we see violation of Ineq.\,\eqref{eq:mainineq3}, we know that these frequency values cannot have lead to the induced dynamics. In accordance, plots with frequencies satisfying and violating Ineq.\,\eqref{eq:mainineq3} are presented in Fig.\,\ref{fig:hamiltonians}. 


\section*{APPENDIX C}\label{app:c}
In application {\bf C)} (Quantum thermometry) we continue on considering a system coupled to a thermal environment. This time we assume to know the oscillator frequencies of the Hamiltonian $\Ho_E = \sum_k \omega_k \big(b_k^{\dagger} b_k + \frac{1}{2}\id\big)$ and again we check the induced dynamics for two different temperatures: $T_0 = 0$ and $T$ that is unknown. Since $\xi(\infty)=\bigotimes_k |0\rangle \langle 0|$, the $\alpha$-R\'enyi divergence between the initial states of the environment in the inverse temperatures $\beta_0 = \infty$ and $\beta=1/k_B T$ can be calculated that
\begin{eqnarray}
&&S_\alpha\big(\xi(\beta_0), \xi(\beta)\big)\nonumber \\
&=&\frac{1}{\alpha -1} \ln\left[ \tr{\left( \xi(\beta)^{\frac{1-\alpha}{2\alpha}} \, \xi(\beta_0) \, \xi(\beta)^{\frac{1-\alpha}{2\alpha}} \right)^\alpha } \right] \nonumber \\
&=&\frac{1-\alpha}{\alpha -1} \ln\left[ e^{-\frac{1}{2} \sum_k  \beta \hbar \omega_k} / Z(\beta, \Ho_E)  \right] \nonumber \\
&=& \frac{1-\alpha}{\alpha -1} \ln\left[ \prod_k \big( 1-e^{-\beta \hbar \omega_k} \big) \right]\, ,
\end{eqnarray}
with $\beta=1/k_B T$ and the last equality follows from inserting $Z(\beta, \Ho_E)$ from Eq.\,\eqref{eq:decopartition}. Similarly, we confirm that
\begin{eqnarray}
S_\alpha\big(\xi(\beta),\xi(\beta_0)\big) &=& \frac{\alpha}{\alpha -1} \ln\left[\prod_k \big(1-e^{-\beta \omega_k}\big) \right]\, .
\end{eqnarray}
Consequently, from Ineq.\,\eqref{eq:mainineq1} one confirms that
\begin{eqnarray}\label{eq:thermo}
&& \prod_k \, \big( 1-e^{-\hbar \omega_k /k_B T}\big) \nonumber \\
&\leq& \inf_{t\geq 0} \left\{ \begin{array}{lc}
\mc F_{\alpha} \big(\mc E_T^{(t)},\mc E_0^{(t)}\big)^{\frac{1}{\alpha}} , & \text{for} \  \alpha \in (0, \frac{1}{2}) \\
\mc F_{\alpha} \big(\mc E_0^{(t)},\mc E_T^{(t)}\big)^{\frac{1}{1-\alpha}}, & \text{for} \ \alpha \in [\frac{1}{2}, 1)
\end{array} \right. ,
\end{eqnarray}
where $\mc E_0^{(t)}$ and $\mc E_T^{(t)}$ are the dynamics induced by $\beta_0$ and $\beta_T$, respectively. From this inequality the temperature $T$ can be numerically estimated. For simplicity, we again consider the case of only a single harmonic oscillator mode. In this case the above relation can be solved in terms of the temperature $T$ analytically: the solution is presented in Ineq.\,\eqref{eq:thermoineq1}. 

Notice, that the above Ineq.\,\eqref{eq:thermo} holds model-independently even without {\it a priori} knowledge of the coupling between the system and the environment. As a concrete example, in Fig.\,\ref{fig:thermo} we have demonstrated the power of our method by considering the Jaynes-Cummings -model of a qubit coupled to a thermal environment consisting of a single harmonic oscillator. In such a model the total Hamiltonian of the system-environment composite reads
\begin{eqnarray}
\Ho &=& \Ho_S + \Ho_E + \Ho_I \nonumber \\
	&=& \omega_0 \sigma_+ \sigma_-  + \omega \big( b^\dagger b + \frac{1}{2} \id \big) + g \big( \sigma_+ \otimes b + \sigma_-\otimes b^\dagger \big)\,,\qquad
\end{eqnarray}
where $\sigma_+ = |1 \rangle \langle 0|$ and $\sigma_- = |0\rangle\langle 1|$ are the raising and
lowering operators of the qubit system, respectively, and $g$ describes the coupling strength. Assume that an initial separable state of the composite system $\varrho \otimes \xi(\beta)$ evolves according to this Hamiltonian. Then, in the resonant case of matching system and environment frequencies $\Delta = \omega - \omega_0 =0$, the reduced dynamics take a relatively simple form
\begin{eqnarray}
&& \varrho = 
\left(\begin{array}{lc}
\varrho_{00} & \varrho_{01}\\
\varrho_{10} & \varrho_{11}
\end{array}\right)
\mapsto \mc E_T^{(t)} (\varrho) = \nonumber \\ 
&& \left(\begin{array}{lc}
a_T(t)\,\varrho_{00} + \big[ 1-b_T(t) \big]\,\varrho_{11}	& 	c_T(t)^* \, \varrho_{01}  \\
c_T(t)\, \varrho_{10}	& \big[ 1-a_T(t) \big]\, \varrho_{00} + b_T(t)\, \varrho_{11}
\end{array}\right),\nonumber \\ 
\end{eqnarray}
where the coefficients
\begin{eqnarray}
a_T(t) &=& \tr {\Co^\dagger (\hat n, t) \Co (\hat n, t) \, \xi(\beta)}\, , \nonumber \\
b_T(t) &=& \tr {\Co^\dagger (\hat n+1, t) \Co(\hat n+1, t) \, \xi(\beta)}\, ,\nonumber \\
c_T(t)&=& \tr {\Co (\hat n+1, t) \Co (\hat n,t) \, \xi(\beta)}
\end{eqnarray}
depend on the operator $\Co(\hat n,t) = \cos\big( \abs{g} t \, \sqrt{\hat n} \big)$, $\hat n = b^\dagger b$ \cite{Smirne2010}. Inserting the thermal state $\xi(\beta)$ in the above formulas we get, e.g., $a_T(t) =\sum_{n = 0}^N e^{-\beta\hbar \omega (n + 1/2)} \cos^2\big(|g| t \sqrt n \big) / \sum_{n = 0}^\infty e^{-\beta\hbar\omega(n+1/2)}$. Although in principle $N\rightarrow\infty$, in lack of a closed form solution we have truncated the series to $N=10$ (similarly for $b_T(t)$ and $c_T(t)$). The error due to this estimation is negligible because for large $N$ the dominating exponent factor makes the summand minuscule. 

Our numerics suggest that, even though the induced dynamics are not purely dephasing, the choice $\varrho_1=\varrho_2=|+\rangle\langle +|$ minimizes the channel $\alpha$-fidelity $\inf_{t\geq 0}\mc F_\alpha \big( \mc E_0^{(t)}, \mc E_T^{(t)}\big)$ also in this case. Therefore, the temperature can, in fact, be directly estimated in terms of the Kullback-Leibler divergence $S_1\big(\mc E_0(\varrho)||\mc E_T(\varrho) \big)$, $\varrho = |+\rangle\langle +|$, as shown in Ineqs.\,\eqref{eq:thermoineq3} and \eqref{eq:thermoineq4} and which we have plotted in Fig.\,\ref{fig:thermo}. The remaining optimization with respect to time $t$ has been done numerically by using the build in algorithms of Mathematica. Finally, we note that, because of the relation of the coupling strength $g$ and $t$ in the terms $\Co(\hat n,t)$ defining the dynamics and the time-optimization involved in our protocol, the upper and lower bounds presented in Fig.\,\ref{fig:thermo} are independent of the strength of the coupling as long as it is non-vanishing.

\end{document}